\documentclass[letterpaper,twocolumn,10pt]{article}
\usepackage{usenix-2020-09}

\usepackage[utf8]{inputenc} 
\usepackage[T1]{fontenc}    
\usepackage{hyperref}       
\hypersetup{
    colorlinks,
    linkcolor={blue!80!black},
    citecolor={blue!80!black},
    urlcolor={blue!30!black}
}
\usepackage{url}            
\usepackage{nicefrac}       
\usepackage{microtype}      
\usepackage{amsmath}
\usepackage{amsthm}
\usepackage{enumitem}
\usepackage{color}
\usepackage{thmtools,thm-restate}	
\usepackage{outlines}
\usepackage{tikz}
\usetikzlibrary{arrows,positioning,shapes.geometric,automata}
\usepackage{multirow}
\usepackage{caption}
\usepackage{listings}
\captionsetup[table]{skip=7pt}


\providecommand{\UsePackageFor}[2]{ \ifx#2\undefined\usepackage{#1}\fi }

\UsePackageFor{amssymb}{\boxtimes}
\usepackage[normalem]{ulem}		
\usepackage{bbding}				
\usepackage{endnotes}			
\usepackage{hyperref}			
\usepackage{hyperendnotes}	
\usepackage{pdfcolfoot}			

%
\ifdefined\OrigFootnote\relax\else
	\newenvironment{FootnoteContent}{}{}
	\let\OrigFootnote\footnote
	\let\OrigFootnoteText\footnotetext
	\renewcommand{\footnotetext}[1]{\OrigFootnoteText{\begin{FootnoteContent}#1\end{FootnoteContent}}}
	\renewcommand{\footnote    }[1]{\OrigFootnote    {\begin{FootnoteContent}#1\end{FootnoteContent}}}
\fi

%
\definecolor{PurplePlum}{rgb}{0.1,0,0.55} 
\definecolor{Brown}{rgb}{0.5,.25,0}
\definecolor{Orange}{rgb}{1,.3,0}
\definecolor{Gray}{rgb}{.7,.7,.7}
\definecolor{DarkGreen}{rgb}{.1,.41,.1}

\newif\ifBleck
\newcommand\Bleck {\Blecktrue} 

\newcommand\Colour[1] {\color{#1}}

\newcommand\PrintToCLinks{	
  {\Colour{blue}\mbox{
    \hyperlink{w1619}{\sf$\rightarrow$~top}\quad
    \hyperlink{w1031}{\sf$\rightarrow$~toc}\quad
    \hyperlink{w1148}{\sf$\rightarrow$~lof}\quad
    \hyperlink{GreenRoom}{\sf$\rightarrow$~gr}\quad
    \hyperlink{EndNotes}{\sf$\rightarrow$~en}\quad
    \hyperlink{Sargasso}{\sf$\rightarrow$~sg}\quad
    \hyperlink{Index}{\sf$\rightarrow$~idx}
  }}
}
\makeatletter 
\newcommand\ToCLinks{
  \ifx\@onlypreamble\@notprerr		
    \hypertarget{w1619}{}			
  \else
    \AtBeginDocument{\hypertarget{w1619}{}}	
  \fi

  \ifBleck\else	
    \ifdefined\cofoot
      \cofoot{\PrintToCLinks}
      \cefoot{\PrintToCLinks}
    \else
      \def\@oddfoot{\PrintToCLinks}
      \def\@evenfoot{\PrintToCLinks}
    \fi
 \fi
}
\makeatother

\newif\ifEndNotes 

\newcommand\FnSym{{\scriptsize\PencilLeftDown\kern.1em}}		
\newcommand\EnSym {{$\bigtriangledown$}}



\def\MarkupsHowto{} 
\newcommand{\MarkupsHowtoAdd}[1]{\expandafter\def\expandafter\MarkupsHowto\expandafter{\MarkupsHowto{}#1}} 

\newif\ifMarkupsHowtoPrinted 
\newif\ifSuppress 

\newcommand\MakeMarkups[3][.]{

     \Suppressfalse
     \ifBleck\Suppresstrue\fi
     \ifx0#1\Suppresstrue\fi
     \ifx1#1\Suppressfalse\fi
     
     \expandafter\providecommand\csname#2x\endcsname {} 
     \ifSuppress\expandafter\renewcommand\csname#2x\endcsname{\relax}\else
                       \expandafter\renewcommand\csname#2x\endcsname{#3}\fi
                       
     \expandafter\providecommand\csname#2\endcsname {} 
     \ifSuppress\expandafter\renewcommand\csname#2\endcsname[1]{##1}\else
                       \expandafter\renewcommand\csname#2\endcsname[1]{{\csname#2x\endcsname##1}}\fi

     \expandafter\providecommand\csname#2d\endcsname {} 
     \ifSuppress\expandafter\renewcommand\csname#2d\endcsname[1]{\relax}\else
                       \expandafter\renewcommand\csname#2d\endcsname[1]{{\csname#2x\endcsname\sout{##1}}}\fi
                       
     \expandafter\providecommand\csname#2r\endcsname {} 
     \ifSuppress\expandafter\renewcommand\csname#2r\endcsname[2]{{##2}}\else
                       \expandafter\renewcommand\csname#2r\endcsname[2]{\csname#2d\endcsname{##1} \csname#2\endcsname{##2}}\fi

     \expandafter\providecommand\csname#2i\endcsname {} 
     \ifSuppress\expandafter\renewcommand\csname#2i\endcsname[1]{\relax}\else
                       \expandafter\renewcommand\csname#2i\endcsname[1]{\csname#2\endcsname{##1}}\fi

     \expandafter\providecommand\csname#2t\endcsname {} 
     \ifSuppress\expandafter\renewcommand\csname#2t\endcsname[1]{\relax}\else
                       \expandafter\renewcommand\csname#2t\endcsname[1]{{\csname#2x\endcsname{\mbox{$\langle\!\langle$}##1{\csname#2x\endcsname\mbox{$\rangle\!\rangle$}}}}}\fi 

     \expandafter\providecommand\csname#2b\endcsname {} 
     \ifSuppress\expandafter\renewcommand\csname#2b\endcsname[1][empty]{\relax}\else 
                       \expandafter\renewcommand\csname#2b\endcsname[1][\empty]{\ifx\empty##1\empty
                       	\label{#2-bookmark} 
                              \marginpar [\raggedleft\csname#2\endcsname{{\footnotesize\fbox{#2 working here}}~$\Longrightarrow$}]
                                                {\csname#2\endcsname{$\Longleftarrow$~{\footnotesize\fbox{#2 working here}}}}
                       \else 
                       	\marginpar [\raggedleft\csname#2\endcsname{\ifx\empty##1\empty\else\fbox{\tiny\parbox{8em}{\raggedright##1}}~\fi$\Longrightarrow$}]
                                                {\csname#2\endcsname{$\Longleftarrow$\ifx\empty##1\empty\else~{\tiny\fbox{\parbox{8em}{\raggedright##1}}}\fi}}\fi}\fi

     \expandafter\providecommand\csname#2TD\endcsname {} 
     \ifSuppress\expandafter\renewcommand\csname#2TD\endcsname{\relax}\else
                       \expandafter\renewcommand\csname#2TD\endcsname{\csname#2\endcsname{\fbox{#2 to do}}}\fi

     \expandafter\providecommand\csname#2Bar\endcsname {} 
     \ifSuppress\expandafter\renewcommand\csname#2Bar\endcsname{\relax}\else
                       \expandafter\renewcommand\csname#2Bar\endcsname{\csname#2\endcsname{\scriptsize\XSolidBrush}}\fi

     \expandafter\providecommand\csname#2f\endcsname {} 
     \ifSuppress\expandafter\renewcommand\csname#2f\endcsname[2][]{\relax}\else
      \expandafter\renewcommand\csname#2f\endcsname[2][\empty]{ 
        {\mbox{\csname#2x\endcsname\tiny$\boxtimes$}\marginpar{\hsize1cm\csname#2x\endcsname\fbox{\FnSym\footnotemark}}\relax 
        \footnotetext{\csname#2x\endcsname##2}}}\fi

     \expandafter\providecommand\csname#2e\endcsname {}
     \ifSuppress\expandafter\renewcommand\csname#2e\endcsname[1]{\relax}\else%
      \expandafter\renewcommand\csname#2e\endcsname[1]{%
       \global\EndNotestrue
       \mbox{\scriptsize\csname#2x\endcsname$\boxtimes$}\relax%
       \marginpar{\hsize1cm\csname#2x\endcsname\fbox{\EnSym\endnotemark%
                          \hypertarget{ENmark\thepage-\theendnote}{}~\hyperlink{ENtext\thepage-\theendnote}{{\Colour{blue}$\downarrow$}}}%
       }%
       {
        \def\zz{\noexpand#3}%
        \edef\z{~{[Endnote \theendnote\ %
        on p.\noexpand\hypertarget{ENtext\thepage-\theendnote}{}\thepage%
                    ~\noexpand\hyperlink{ENmark\thepage-\theendnote}{{\noexpand\Colour{blue}$\uparrow$}}]}%
        }%
        \expandafter\endnotetext\expandafter{\z\vspace{2ex}\\ ##1\newpage}%
       }
      }\fi

     \expandafter\providecommand\csname#2n\endcsname {}
     \ifSuppress\expandafter\renewcommand\csname#2n\endcsname[1]{\relax}\else%
      \expandafter\renewcommand\csname#2n\endcsname[1]{%
       \global\EndNotestrue
    \marginpar{{\tiny\endnotemark}\hypertarget{ENmark\thepage-\theendnote}{}~\hyperlink{ENtext\thepage-\theendnote}{}}
       {
        \def\zz{\noexpand#3}%
        \edef\z{~{\zz[Endnote (deferred) 
        from p.\noexpand\hypertarget{ENtext\thepage-\theendnote}{}\thepage%
        ]}%
        }%
        \expandafter\endnotetext\expandafter{\z\vspace{2ex}\\ ##1\newpage}%
       }
      }\fi

     \expandafter\providecommand\csname#2fe\endcsname {} 
     \ifSuppress\expandafter\renewcommand\csname#2fe\endcsname[2][]{\relax}\else 
      \expandafter\renewcommand\csname#2fe\endcsname[2][]{ 
       \def\File{##1}\relax
       \ifx\File\empty\csname#2f\endcsname{##2}\else 
        \global\EndNotestrue 
        \mbox{\scriptsize\csname#2x\endcsname$\boxtimes$}
        \marginpar{\csname#2x\endcsname\fbox{\FnSym\footnotemark}}\relax
        \footnotetext{~\csname#2x\endcsname##2\
                             --- See [\EnSym\endnotemark\hypertarget{ENmark\thepage-\theendnote}{}
                             \kern-.2em\hyperlink{ENtext\thepage-\theendnote}{{\Colour{blue}$\downarrow$}}].}\relax
       { 
         \def\zz{\noexpand#3}
         \edef\z{~{\zz[Endnote~\thefootnote~on~p.\noexpand\hypertarget{ENtext\thepage-\theendnote}{}\thepage
                     ~\noexpand\hyperlink{ENmark\thepage-\theendnote}
                     {{\noexpand\Colour{blue}\kern-0.1em$\uparrow$}]}}
                     {\noexpand\footnotesize\noexpand\newline\noexpand\hspace*{2em} (~from file {\noexpand\tt\File.tex}~)}
         }    
         \expandafter\endnotetext\expandafter{\z~\par\input{##1}\newpage}
        } 
       \fi 
      } 
     \fi 

     \ifSuppress\relax\else\ifBleck\relax\else
      \MarkupsHowtoAdd{\par\csname#2t\endcsname{
       $\backslash$\texttt{#2}$\cdots$\ markups are in \textbf{this} colour\ifx#1..\else\ifx1#1.\else, e.g.\ for #1.\fi\fi
       \ifMarkupsHowtoPrinted\relax\else 
        \global\MarkupsHowtoPrintedtrue 
        \begin{quote}\begin{tabular}{l@{\hspace{2em}}p{.7\linewidth}}
         \multicolumn{2}{l}{\texttt{$\backslash$MakeMarkups\ifx#1.\relax\else[#1]\fi\{#2\}\{{\emph{$\langle$colour command\/$\rangle$}}\}}
         				 --- Defines the macros below:}\\
             & see comments at \texttt{$\backslash$MakeMarkups} definition. \\[1ex]
         \texttt{$\backslash$#2\{$\langle$text$\rangle$\}} & Sets \texttt{$\langle$text$\rangle$} in \texttt{#2}'s colour. \\
         \texttt{$\backslash$#2x} & Changes to \texttt{#2}'s colour (until end of context). \\
         \texttt{$\backslash$#2d\{$\langle$text$\rangle$\}} & Sets \texttt{$\langle$text$\rangle$} in \texttt{#2}'s colour with a strikethrough (i.e.\ delete). \\
         \texttt{$\backslash$#2r\{$\langle$this$\rangle$\}\{$\langle$that$\rangle$\}} &
          Strikes through \texttt{$\langle$this$\rangle$} and inserts \texttt{$\langle$that$\rangle$} (i.e.\ replace). \\
         \texttt{$\backslash$#2f\{$\langle$text$\rangle$\}} & Meta-comment: puts \texttt{$\langle$text$\rangle$} in a \texttt{#2}-footnote with a {\tiny$\boxtimes$} in the main text. \\
         \texttt{$\backslash$#2t\{$\langle$text$\rangle$\}} & Use for meta when  \texttt{$\backslash$#2f} isn't allowed (``Not in outer-par mode.'') \\
         \texttt{$\backslash$#2b[$\langle$optional$\rangle$]} & Marginal pointer, with label for hyper-linking directly there. \\
         \texttt{$\backslash$#2e\{$\langle$text$\rangle$\}} & Puts \texttt{$\langle$text$\rangle$} in a \texttt{#2}-endnote with a (big) $\boxtimes$ in the main text. \\[.5ex]
         \texttt{$\backslash$#2n\{$\langle$text$\rangle$\}} & Like \texttt{$\backslash$#2e}
         except there's no reference from the main text. Good for ``decluttering''
         when you still want to have the footnote- or endnote texts as reminders. \\[.5ex]
         \texttt{$\backslash$#2fe[$\langle$this$\rangle$]\{$\langle$that$\rangle$\}} & Makes a \texttt{$\backslash$#2f\{$\langle$that$\rangle$\}} that refers to a \\
           & \texttt{$\backslash$#2e\{$\langle$contents of file this.tex$\rangle$\}}. \\ 
           & Without the optional argument, acts as \texttt{$\backslash$#2f\{$\langle$that$\rangle$\}}. \\[.5ex]
         \texttt{$\backslash$#2Bar} & Inserts ``burn after reading'' symbol \csname#2Bar\endcsname, meaning
          \begin{quote}\begin{itemize}\setlength\itemsep{0pt}
           \item If yours is the only \csname#2Bar\endcsname\ in this (presumably someone else's) footnote, and you are happy that the footnote has been addressed,
           go ahead and comment-out the whole footnote. (The \csname#2Bar\endcsname\ is their request for you to ``approve and remove''.)
           \item If you are not happy, delete only your \csname#2Bar\endcsname\ and follow-on in the footnote
            (in your colour, i.e.\ with \texttt{$\backslash$#2x}) saying why you are not happy.
           \item If you are happy, but there are others' burn-after-reading symbols as well as yours, just delete yours; the other people have not yet responded.
          \end{itemize}
          \end{quote}
          The idea is that when everyone's happy, the last person will comment-out the meta-text. \\[0.5ex]
         \texttt{$\backslash$#2TD} & Inserts {\csname#2TD\endcsname}\ . \\
        \end{tabular}\end{quote}
       \fi
      }}
     \fi\fi
}


\newif\ifNoGreenRoom
\newcommand\MakeGreenRoom {\ifBleck\relax\else\ifNoGreenRoom\relax\else
\newcommand\NewGRLabel[1] {\OldGRLabel{GreenRoom-##1}} 
 \newcommand\NewGRRef[1] 
 {\expandafter\ifx\csname r@GreenRoom-##1\endcsname\relax\OldGRRef{##1}\else\OldGRRef{GreenRoom-##1}\fi}
 \let\OldGRLabel\label \let\label\NewGRLabel
 \let\OldGRRef\ref \let\ref\NewGRRef
 \hrule
 ~\\\begin{center}\Huge \hypertarget{GreenRoom}{Green Room}
 \end{center}~\\
 \hrule
\fi\fi}

\newcommand\EndGreenRoom  {\ifBleck\relax\else\ifNoGreenRoom\relax\else
\let\label\OldGRLabel
\let\ref\OldGRRef
\fi\fi}

\newif\ifNoEndNotes
\newcommand\MakeEndNotes {\ifBleck\relax\else\ifNoEndNotes\relax\else
 \hrule\vspace{20pt}\begin{center}\ifEndNotes\Huge Endnotes\else\textit{No endnote macros were called in this run.}\fi\end{center}\vspace{20pt}\hrule
 {\parindent 0pt \parskip 2ex \def\enotesize{\normalsize} \def\notesname{} 
 \ifEndNotes\theendnotes\fi} 
\fi\fi}

\newif\ifNoSargasso
\newcommand\MakeSargasso {
 \hypertarget{Sargasso}{}
 \newcommand\NewLabel[1] {\OldLabel{Sargasso-##1}} 
 \newcommand\NewRef[1] 
 {\expandafter\ifx\csname r@Sargasso-##1\endcsname\relax\OldRef{##1}\else\OldRef{Sargasso-##1}\fi}
 \let\OldLabel\label \let\label\NewLabel
 \let\OldRef\ref \let\ref\NewRef
\ifBleck\end{document}\else\ifNoSargasso
\relax
\else
  \hrule
  ~\\\begin{center}\Huge Sargasso
  \end{center}~\\
  \hrule
 \fi\fi
}

\newcommand\EndSargasso  {\ifBleck\relax\else\ifNoSargasso\relax\else
\let\label\OldLabel
\let\ref\OldRef
\fi\fi}

\newcommand\EndDocument {\ifBleck\end{document}\fi} 

\newcommand\Cite[2][\empty] {{\Colour{red}\ifx#1\empty[#2]\else[#2,~#1]\fi}}



\Bleck		
\MakeMarkups[Kostas]{K}{\Colour{Brown}}
\MakeMarkups[George]{G}{\Colour{DarkGreen}}
\MakeMarkups[Thodoris]{T}{\Colour{PurplePlum}}

\newcommand{\Wide}[1]{~#1~}


\newcommand\eatspace[1]{}
\def\sectionautorefname{\S\kern.03em\eatspace}  
\def\subsectionautorefname{\S\kern.03em\eatspace}
\def\subsubsectionautorefname{\S\kern.03em\eatspace}

\begin{document}

\twocolumn
\setcounter{section}{0} 
\setcounter{page}{1}

\title{Tor circuit fingerprinting defenses using adaptive padding}

\author{
{\rm George Kadianakis}\\
University of Athens \\
The Tor Project \\
\and
{\rm Theodoros Polyzos}\\
University of Athens \\
\and
{\rm Mike Perry}\\
The Tor Project \\
\and
{\rm Kostas Chatzikokolakis}\\
University of Athens \\
} 

\maketitle

\begin{abstract}

  Online anonymity and privacy has been based on confusing the adversary by
  creating indistinguishable network elements. Tor is the largest and most widely
  deployed anonymity system, designed against realistic modern
  adversaries. Recently, researchers have managed to fingerprint Tor's circuits
  -- and hence the type of underlying traffic -- simply by capturing and
  analyzing traffic traces. In this work, we study the \emph{circuit
    fingerprinting} problem, isolating it from website fingerprinting, and
  revisit previous findings in this model, showing that accurate attacks are
  possible even when the application-layer traffic is identical. We then
  proceed to incrementally create defenses against circuit fingerprinting,
  using a generic adaptive padding framework for Tor based on WTF-PAD. We
  present a simple defense which delays a fraction of the traffic,
  as well as a more advanced
  one which can effectively hide onion service circuits with zero delays.  We thoroughly evaluate both defenses, both analytically and
  experimentally, discovering new subtle fingerprints, but also showing the
  effectiveness of our defenses.

\end{abstract}

\section{Introduction}\label{sec:introduction}

Tor uses \emph{circuits} for anonymous communication. Tor circuits are
multi-hop paths through the Tor network carrying application data.  There are
various types of circuits, some of them for navigating the regular Internet,
others for fetching Tor directory information, connecting to onion services or
simply for measurements and testing.  Although all traffic is encrypted, it is
still possible for certain type of adversaries to distinguish Tor circuit types
from each other using a wide array of metadata and distinguishers; such attacks
are known as \emph{circuit fingerprinting}.

Circuit fingerprinting attacks are dangerous because by determining the Tor
circuit type, an adversary immediately gets to shrink the ``world size'' or
``base rate'' of traffic that it needs to consider for \emph{website
  fingerprinting} or \emph{end-to-end correlation} attacks to succeed
\cite{website-fpr-juarez}. A website fingerprinting adversary who learns that
the client is visiting an onion service immediately enjoys a significantly
smaller universe size of web pages that it needs to identify, effectively
reducing the problem to a closed-world setting \cite{analysis-fpring}.

\paragraph{Contributions}

In this paper, we study the circuit fingerprinting problem, isolating and
formalizing it on its own, separately from other fingerprinting problems.  We
start by presenting our threat model (\autoref{sec:model}) and experimental
methodology (\autoref{sec:methodology}). We then revisit and verify previous
results by Kwon et al. \cite{KwonALDD15} (\autoref{sec:vanilla-fpring}) to show
that distinguishing onion service circuits is possible with high accuracy, even
when the application-layer traffic is identical, by exploiting fingerprints of
the onion service \emph{protocol itself}.

We design and implement a versatile adaptive padding framework (\autoref{sec:wtf}) which has since
been merged in the mainline Tor client and deployed to the live network.
Then, in \autoref{sec:design}, we
proceed by analyzing onion handshakes and demonstrating how we can imitate
accurately both their shape and timing characteristics using our adaptive padding framework.
Employing our dummy onion
handshakes as a primitive, we then design and evaluate two distinct padding defense
strategies.

First, we present an intuitive delay-based defense (\autoref{sec:fractional-delay})
which delays clearnet traffic to inject a dummy handshake.
We show that the latency overhead can be reduced at no privacy cost by
delaying only a \emph{fraction} of the traffic and exploiting the base rate
to our advantage.
During the evaluation of this  defense we also find previously unknown subtle
behaviors that can act as circuits fingerprints.

We then present a more advanced zero-delay padding defense that employs
\emph{preemptive circuits}, an optimization mechanism that already exists in
Tor. We provide both an analytic evaluation, giving precise expressions
relating the parameters of the defense to the classification accuracy, as well
as an experimental evaluation, confirming the analytic results, and showing
that an effective defense is possible with no delays and limited bandwidth
overhead.

\paragraph{Tor network}

The Tor network is among the most popular tools for digital privacy and
anonymity. As of September 2020, the Tor network consists of almost 6,500
volunteer-run relays.

In its basic mode, Tor allows its clients to achieve anonymity when connecting
to TCP/IP services. This is achieved by forming a virtual circuit through the
relays of Tor network in an interactive and incremental fashion. This is
typically done by extending the circuit to three hops: the \emph{entry guard};
the \emph{middle}; and the \emph{exit}.

Once a circuit is established, the client, Alice, creates \emph{streams}
through the circuit by instructing the exit to connect to the desired external
Internet destinations. Each pair of relays communicate over a single onion
routing connection that is built using TCP. Application layer protocols rely on
this underlying TCP connection to guarantee reliability and in-order delivery
of application data, using fixed-size data packets (512 bytes) called
\emph{cells}, between each relay. Streams are multiplexed over circuits, which
themselves are multiplexed over connections.

In the case of a website, Alice asks the \emph{exit relay} to connect to the
destination website and the communication between Alice and the website happens
through the resulting circuit. We call the circuit that connects to the
normal web an \emph{exit circuit} and the whole connection a \emph{clearnet
  connection}.

\paragraph{Tor Onion services}

In addition to client anonymity, Tor allows operators to set up anonymous
servers, typically called \emph{onion services}\cite{rend-spec}. This is
achieved by routing communication between the client and the onion service
through a rendezvous point which connects anonymous circuits from the client
and the server.

When a client, Alice, connects to an onion service Bob, she first creates a
directory circuit (in this work we call it \emph{HSDir circuit}) and fetches
the \emph{descriptor} document of the onion service. The descriptor contains
the \emph{introduction points} of the onion service which are also relays on
the Tor network.

Alice opens a \emph{rendezvous circuit} to a randomly sampled relay on the Tor
network, and asks it to become the \emph{rendezvous point} for this
session. Alice then connects to one of the introduction points of Bob using an
\emph{introduction circuit} and instructs Bob to meet at the determined
rendezvous point.
When Bob connects to the given \emph{rendezvous point}, the latter
bridges Alice's and Bob's circuits and the rest of the communication happens
over the spliced rendezvous circuit.

During this work, we call the set of an \emph{HSDir circuit},
\emph{introduction circuit} and \emph{rendezvous circuit} the \emph{onion
  service circuit triplet}. All the communication over Tor circuits described
above occurs using Tor \emph{cells} and each circuit type has its own unique
cell pattern. Recognizing those cell patterns and distinguishing the circuit
purposes from each other is the art of \emph{circuit fingerprinting} that we
explore in this paper.

\paragraph{Related work}

While circuit fingerprinting has been assumed to be possible in the past, Kwon
et al. \cite{KwonALDD15} were the first to formally present the problem, their
methodology and demonstrate successful classifiers against onion service
circuits. They extracted specific features from the circuit cell flow and
trained classifiers in a way that offered very high accuracy. Kwon et
al. focused on entry guard and network-level adversaries launching circuit
fingerprinting attacks, while subsequent research by Jansen et
al.\cite{inside-job} explored circuit fingerprinting by middle nodes for the
purpose of measuring the popularity of specific onion services.

The \emph{circuit fingerprinting} methodology is inspired by the related field
of \emph{website fingerprinting}. Website fingerprinting was first studied in
the context of SSL encrypted traffic\cite{first-wf}, and in recent years there
is increasing focus in fingerprinting websites over Tor\cite{wf-tor,
  PanchenkoMHLWE17, panchenko-website}. Website fingerprinting experiments have
been using machine learning for classification\cite{deepfingerprinting} and
feature selection has been deeply studied \cite{wf-feature-selection, tik-tok}.

In terms of defenses, padding is a well-established technique in the field of
low-latency anonymity networks\cite{adaptive-shmatikov, peekaboo}. The WTF-PAD
adaptive padding system proposed by Juarez et al. \cite{wtfpad} was designed
against website fingerprinting and has been shown to be effective and also
versatile enough to be used for multiple purposes.

Finally, several attacks have been proposed to deanonymize onion services and
their clients using Sybil and other protocol
attacks\cite{trawling, locating-onions, sniper}.


\section{Threat model}\label{sec:model}

Tor aims to minimize the amount of metadata leaked to various external
adversaries. This work concerns \emph{circuit fingerprinting} which allows
attackers to distinguish onion service connections from other types of
connections. In this section, we isolate this problem and
investigate the types of adversaries that can
launch such attacks.
We assume that the attacker is either a
network adversary or a relay adversary. This is because the attacker needs to
be able to extract fine-grained features out of the client's traffic and in
particular distinguish cells from each other.

A \emph{relay adversary} can be part of the Tor network by setting up relays as
part of a \emph{Sybil attack}. This allows the attacker to spectate or even
influence the user's traffic. During this work, the main adversary class we are
concerned about is adversarial guard nodes since they are in position to know
the location of the user and hence have more power if they manage to carry out
a circuit fingerprinting attack. A guard adversary can perform traffic analysis
attacks on the user's traffic since they have full visibility on the Tor
circuit and the cells transferred within. While they cannot see the types or
contents of Tor cells, they can still see the exact number of incoming and
outgoing cells.

A \emph{network adversary} can be the network administrator of the user, their
ISP, or any other intruder between the user and their guard node. A
network-level adversary can collect TCP traces from the user and then use
heuristic algorithms to turn those traces into fine-grained cell sequences. For
the purposes of this research we assume this is possible with high
accuracy\cite{lan-adversary-ian}. The website fingerprinting literature has
been working with the assumption that this is possible with a high-probability
of success\cite{KwonALDD15,ian-wf-lan}; in \autoref{sec:discussion} we discuss
how that assumption ties into the circuit fingerprinting problem.

\paragraph{Separating the problems}

In this work, we tackle the \emph{circuit fingerprinting} problem, which we
believe is a distinct problem, related to, but separate from \emph{website}
fingerprinting.

\begin{itemize}
  \setlength\itemsep{0.06em}
\item In \emph{circuit fingerprinting}, a relay or network adversary
  distinguishes circuit types and purposes in a Tor session. It concerns the
  Tor protocol traffic.

\item In \emph{website fingerprinting}, a relay or network adversary
  distinguishes websites from each other within a Tor session. It concerns the
  application-layer traffic.
\end{itemize}
Separating these problems from each other allows us to isolate them, gain
a greater understanding on how they interact with each other and design better defenses.

\emph{Website fingerprinting} attacks are difficult to defend against because
the universe size of web pages is vast and there is no centralized way to make
them all look alike. On the other hand, the Tor protocol has a smaller
fingerprinting surface and \emph{circuit fingerprinting} defenses can be
designed and deployed with agility through Tor's protocol upgrades.

Leaving any of those two problems unsolved, allows the adversary to leverage it
to solve the other problem. For instance, an adversary who can solve the
\emph{website fingerprinting} problem can use its distinguisher to make
\emph{circuit fingerprinting} easier by classifying all websites found as a
specific type of circuit. Similarly, an adversary who can solve the
\emph{circuit fingerprinting} problem can use its distinguisher to make
\emph{website fingerprinting} easier, by classifying non-website circuits as
irrelevant for the purposes of website fingerprinting.

With the goal of not conflating these problems, we
devise a methodology for studying circuit-fingerprinting in isolation.


\section{Experiment methodology}\label{sec:methodology}

This section describes the methodology employed in all experiments performed in
this paper in order to evaluate the various defenses against a circuit
fingerprinting adversary.

\paragraph{Data collection, processing and classification}

Our data consists of 100 real onion addresses, selected based on their
popularity and availability. We locally fetched every home page and lightly
processed them by stripping off any external resources. Finally, we hosted each
cached website on our own server twice: once on a clearnet server, and an
\emph{identical} version on an onion service.
Serving the same content over both types of connections follows the core
foundation of our methodology of keeping the application-layer traffic
identical. The \emph{circuit fingerprinting} problem is solely concerned with
the Tor protocol layer and hence the application-layer traffic should not play
a role in the classification process.

To collect the actual traffic traces we used Tor Browser and automated it using
Selenium \cite{Selenium}.  We patched the Tor client to log fine-grained
details about its operation, in particular log every
incoming and outgoing cell.
During data collection we instructed Selenium to fetch pages off our dataset
multiple times. The amount of times and the set of pages to fetch depends on
the experiment. Between each page fetch we forced Tor to avoid reusing guards
and to create a new set of circuits instead of using already used circuits.

During the website caching procedure, we make sure that every page element is
downloaded on our machine and no external resources need to be loaded when we
browse our cached web pages. This way we eliminate any obstacles
that may arise during a web request, such as broken links or resources that
might return corrupted content (CDN resources, URL redirects, etc.).

After the data collection step, we need to prepare the data for
classification. For this purpose we created a processing script
that reads the Tor log file and keeps track of
all the circuits and their cells.
The script outputs
a dataset for the classifier, depending on the current \emph{classification
scenario}. The latter provides the logic for our processing script to
classify circuits as onion-related or not. Each experiment uses a different
classification scenario, described in the respective sections of
this work.
The generated dataset contains an entry
for each circuit, including its \emph{label}
(whether it is an onion circuit or not),
and a vector of features.

Finally, we pass the processed data to our classifiers.  Depending on the
experiment, the classifier splits the set of circuits into a training set and a
testing set, and then trains itself using the training set. After evaluating
different types of machine learning and deep learning classifiers, we used SVM
and decision tree classifiers because they are robust and their resulting
fingerprints are easy to analyze and comprehend.

The \emph{label} field of the intermediate file is solely used as ground truth
for training purposes, and to measure the final accuracy of the classifier.

\paragraph{Timing analysis methodology}\label{sec:timing-methodology}

While the primary focus of this work are defenses that can imitate the \emph{shape}
of onion circuits, that is their precise
cell sequence, we also perform preliminary analysis on the
timing fingerprints of onion handshakes in \autoref{sec:handshake-timing}.
For this purpose, we enhanced our data collection logic to keep
the times of incoming and outgoing cells and wrote code that analyzes them to
derive statistical data.

We found that working with timing features on the Tor network was complicated
because the network's performance exhibits wild short-term fluctuations due to
its unpredictable user model, overuse, and the lack of congestion
control\cite{congestion-prop}. To reduce the timing noise observed in our
measurements we carried out measurements over multiple days and pruned extreme
performance outliers.

\paragraph{Classification Scenarios}

Similar to the website fingerprinting world, our experiments also carry the
concept of \emph{open} and \emph{closed world}.
We call an experiment
\emph{open world} when the hosts in the training URL dataset are different from
the hosts in the test URL dataset.
On the other hand, we call an experiment to be \emph{closed world} when the
training URL dataset and the test URL dataset contain sessions to the same
destinations. It is important to note here, that in \emph{closed world}
scenarios we never share the exact same session between training and test sets;
instead we share different connection sessions to the same destination.

In the context of this distinction, we should point out that isolating the two
problems led us to an interesting observation:
while \emph{open world} scenarios have been
traditionally much harder to a \emph{website} fingerprinting attacker, this
asymmetry was not present in our results.
The reason is that, in the isolated \emph{circuit} fingerprinting problem,
the fingerprints lie in the Tor protocol itself and not the
application-layer content. In other words, an adversary learns to
identify \emph{any} onion circuit, not those to a particular website.

We also employ two types of scenarios based on the number of websites they
involve: a \emph{single-site} scenario contains a single destination hosted
both in clearnet and as an onion. The classifier is called to classify whether
a visit to the destination happened over clearnet or over onion. This is easier
for the classifier and it is meant to model a future hypothetical world where
\emph{website fingerprinting} has been solved and all application-layer traffic
is identical.

A \emph{multi-site} scenario contains multiple destinations hosted both in
clearnet and as an onion. The classifier is called to decide whether a visit to
the destination happened over clearnet or over onion without knowing the actual
destination. This is meant to model the current world where all websites look
different and can produce uncertainty to the adversary.

We pick the classification scenarios that suit each experiment. Specifically, a
\emph{single-site closed} scenario is good for demonstrating defenses since
classifying a single known website is the easiest scenario for the
adversary. On the other hand, a \emph{multi-site open} scenario is good for
demonstrating attacks since classifying multiple unknown websites is the
hardest scenario for the adversary.


\section{Evaluating vanilla circuit fingerprinting}\label{sec:vanilla-fpring}
We start off our work by examining whether the claims of Kwon et al.
\cite{KwonALDD15} are still valid if we restrict classification to the circuit
fingerprinting problem \emph{alone}. In particular, we want to examine whether
rendezvous and introduction circuits can be differentiated from all other
circuits, when the application-layer traffic is identical for both clearnet and
onion connections. This is important for ensuring that the information leak
lies in the onion service protocol \emph{itself}.

To perform this experiment, we employ two classifiers: first, a classifier
which separates \emph{introduction} circuits from all other types, and second,
a classifier which separates \emph{rendezvous} circuits from all other types.
These classifiers are similar to those of previous research \cite{KwonALDD15},
where all circuit types apart from the target type are considered to be
noise. Dataset sizes are available in \autoref{sec:dataset-sizes}.

We used the \emph{complete} cell sequence (an array of
$-1$ and $+1$ elements) as a feature vector for classification.
This is possible since cells are large and onion websites are generally
small, fitting in a few hundred cells.
Note that, since cells have fixed size, any shape-related feature
(eg. number or percentage of incoming/outgoing cells)
can be deduced from this feature vector;
we tried including such features explicitly but it made no difference to
the classifier.
Finally, following \cite{KwonALDD15}, we included the lifetime of the circuit as a feature.

For this experiment we used 100 websites from our dataset.  In contrast to
\cite{KwonALDD15}, we used the same websites served over both clearnet and
onion connections, in order to restrict classification to the circuit
fingerprinting problem alone.  In both the \emph{closed} world and \emph{open}
world setting, our classifiers performed very well having an average
accuracy of 98-99\%.
This shows that circuit fingerprinting is possible
without relying on the traffic of the websites themselves; in other words, the
traffic pattern produced by the \emph{protocol itself} can be used as a
fingerprint.

Even more importantly, since the traffic pattern of the website is not exploited
in the attack, circuit fingerprinting is possible with high accuracy even in the
\emph{open world} model. An adversary with no prior knowledge of a website can
infer whether the website is accessed as a clearnet or onion service. In
that sense, circuit type fingerprinting is an inherently ``closed world''
problem, as there are only a handful of different circuit types.


\section{An adaptive padding framework for Tor}\label{sec:wtf}

After confirming that fingerprinting onion connections is indeed
possible, we started experimenting with padding defenses by designing an
extensible adaptive padding framework based on WTF-PAD \cite{wtfpad}. We then
implemented it and our padding framework has been merged into upstream Tor
since the \emph{0.4.0.1-alpha} release. We have also written detailed developer
documentation that can be used by researchers to design and experiment with
padding defenses in Tor\cite{tor-circ-padding-docs}. Since its creation, our
framework has been used to protect against website fingerprinting\cite{tobias}
and we envision that it can also be used to build defenses against guard
discovery and other correlation attacks\cite{guardsets}.

Using the Tor padding framework we can schedule arbitrary circuit-level padding
patterns to overlay on top of non-padding traffic;
such padding can occur between clients and relays at any hop of the
client's circuits. Both parties need to support the same padding mechanisms for
the system to function. We added a \emph{padding negotiation cell} to the Tor
protocol that clients can use to negotiate with relays which padding patterns
should be used.

Padding is performed by \emph{padding machines} which have \emph{finite state}.
Every state specifies a different form of padding style, or stage of
padding, in terms of inter-packet timings and total packet counts.
Padding state machines are implemented by filling in a C structure,
which specifies the transitions between padding states based on various events,
probability distributions of inter-packet delays, and the conditions under
which padding machines should be applied to circuits.

This compact C structure representation is designed to function as a
microlanguage which can be compiled down into a bitstring. This allows the
padding machine's parameters to be tuned using optimization methods
such as gradient descent, or generative adversarial networks. When performing
such an optimization search, each padding machine can have a fitness function,
which allows researchers to compare padding machines for relative
effectiveness\cite{tobias}.

The event driven, self-contained nature of this framework is also designed to
make evaluation both expedient and rigorously reproducible.

\paragraph{Minimizing delays in defenses}

One of the goals of our padding-based defenses, is that we should avoid
delaying traffic if possible. Given the low-latency model of Tor, we aim to
avoid additional latency by delaying useful traffic as part of our
defenses. This means that we want to achieve our security goals only by
inserting padding traffic.

We envision that extra padding traffic can scale well in the future since
bandwidth costs decrease over time and throughput increases and hence the
bandwidth is not our most scarce resource. On the other hand, communication
speed between computer links will improve much more slowly.

\paragraph{Padding machines for circuit fingerprinting}

After deploying our framework to the live Tor network, we tested our framework
by designing padding machines that hide introduction circuits,
based on the fingerprints identified in \cite{KwonALDD15} and \autoref{sec:vanilla-fpring}.
Our deployed machines aimed to make introduction circuits look like
\emph{directory circuits} fetching \emph{directory information}. The machines
obfuscate the handshake sequence of introduction circuits by carefully adding
cells to the right place. The padding was initiated by the client and went up
to the middle relay, at which point the middle relay replied with padding of
its own.

We built and
deployed these machines on the live Tor network;
in doing so
we verified that
adaptive padding works and solved various engineering hurdles and issues that
arose. We also hardened our framework and
improved its correctness which is fundamental for future research on this
topic.

On the other hand, we also build a similar machine aimed at hiding rendezvous circuits,
which however failed its goal. Repeating the experiments of \autoref{sec:vanilla-fpring}
with this machine enabled, we found that the classification accuracy  was only
slightly reduced. Moreover, we also discovered that HSDir circuits are
fingerprintable and can be used to distinguish between
onion and clearnet connections.

The main takeaway from this experiment is that hiding just one circuit of the
entire onion service setup is not sufficient to hide the entire onion
connection. In the following section, we will explore this more deeply.

\begin{figure}
  \tikzset{>=latex}
  \begin{tikzpicture}[every node/.style={scale=0.81}]
      \tikzset{
        title/.style= {align=left,minimum width=0.7cm,minimum height=0.5cm},
        block/.style= {draw, rectangle, align=center,minimum width=0.7cm,minimum height=0.5cm},
      }

      \node [block] at (10pt,0pt)  (hsdir) {HSDir fetch};

      \node [block, below right of=hsdir] at (60pt, -1pt) (est_rend) {Establish rendezvous};
      \node [block, right =2cm of est_rend] (rend2) {Rendezvous completed};

      \node [block, below right of=est_rend] at (125pt, -15pt) (intro) {Introduction};

      \draw [dashed] (-20pt, 0pt) -- (hsdir.west);
      \draw [dashed] (hsdir.east) -- (250pt, 0pt);

      \draw [dashed] (-20pt, -17pt) -- (est_rend.west);
      \draw [dashed] (est_rend.east) -- (rend2.west);

      \draw [dashed] (-20pt, -32pt) -- (intro.west);
      \draw [dashed] (intro.east) -- (250pt,-32pt);
  \end{tikzpicture}

  \caption{Onion handshake flow}
  \label{fig:onion-handshake}
\end{figure}
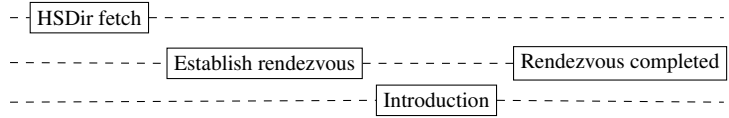

\begin{figure}
  \tikzset{>=latex}
  \begin{tikzpicture}
	\draw (0,0) -- (0,4);
	\draw (2,0) -- (2,4);
	\draw (4,0) -- (4,4);
	\draw (6,0) -- (6,4);
	\draw (8,0) -- (8,4);

	\draw [dotted, arrows={-triangle 45}]
	(0, 4) --  node [midway, fill=white, text centered]
	{ \scriptsize EXTEND } (4, 3.75);

	\draw [dotted, arrows={triangle 45-}]
	(0, 3.25) --  node [midway, fill=white, text centered]
	{ \scriptsize EXTENDED } (4, 3.5);

	\draw [dotted, arrows={-triangle 45}]
	(0, 3.0) -- node [midway, fill=white, text centered]
	{ \scriptsize EXTEND } (6, 2.75);

	\draw [dotted, arrows={-triangle 45}]
	(6, 2.5) -- node [midway, fill=white]
	{ \scriptsize EXTENDED } (0, 2.25);

    \draw[dashed, arrows={-triangle 45}] (0, 2.0) -- node [midway, fill=white, text centered] {\scriptsize EXTEND} (8, 1.75);
    \draw (0, 2.0) -- (3.4, 1.89);

    \draw[dashed, arrows={-triangle 45}] (8, 1.5) -- node [midway, fill=white, text centered] {\scriptsize EXTENDED} (0, 1.25);
    \draw (3.3, 1.35) -- (0, 1.25);

    \draw[dashed, arrows={-triangle 45}] (0, 1.0) -- node [midway, fill=white, text centered] {\scriptsize INTRODUCE1} (8, 0.75);
    \draw (0, 1) -- (3.20, 0.90);

    \draw[dashed, arrows={-triangle 45}] (8, 0.5) -- node [midway, fill=white, text centered] {\scriptsize INTRO\_ACK} (0, 0.25);
    \draw (2.85, 0.34) -- (0, 0.25);

	\node at (0,4.25) {\scriptsize Client};
	\node at (2,4.25) {\scriptsize Guard};
	\node at (4,4.25) {\scriptsize Middle 1};
	\node at (6,4.25) {\scriptsize Middle 2};
	\node at (8,4.25) {\scriptsize Intr. Point};
  \end{tikzpicture}

  \caption{Cell sequence of an introduction handshake}
  \label{fig:intro-circuit-cell-flow}
\end{figure}
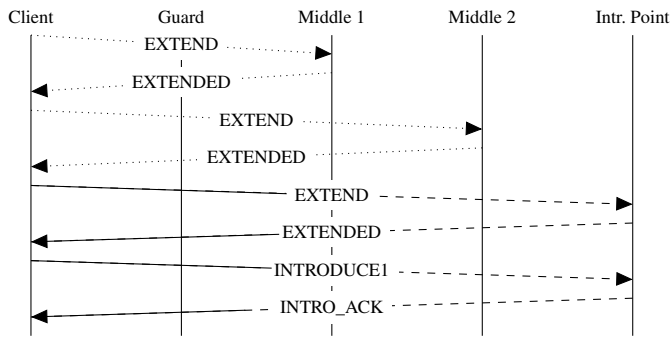

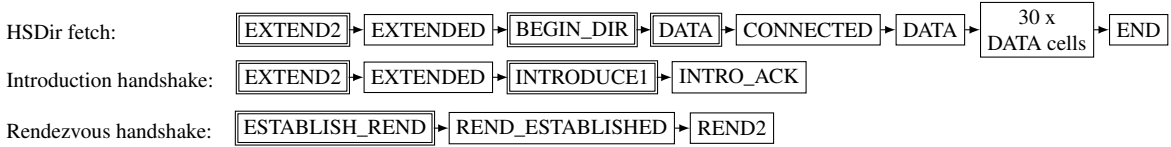
\begin{figure*}
  \tikzset{>=latex}
  \begin{tikzpicture}[every node/.style={scale=0.81}]
      \tikzset{
        title/.style= {align=left,minimum width=0.7cm,minimum height=0.5cm},
        block/.style= {draw, rectangle, align=center,minimum width=0.7cm,minimum height=0.5cm},
        rblock/.style={draw, shape=rectangle,double,double distance=0.02cm,align=center,minimum width=1cm,minimum height=0.5cm},
      }
      \node [title] at (0,-0.03) (title) {HSDir fetch:};
      \node [rblock] at (3.08,0) (extend2_1) {EXTEND2};
      \node [block, right =0.2cm of extend2_1] (extended_1) {EXTENDED};
      \node [rblock, right =0.2cm of extended_1] (begin_dir) {BEGIN\_DIR};
      \node [rblock, right =0.2cm of begin_dir] (data_1) {DATA};
      \node [block, right =0.2cm of data_1] (connected) {CONNECTED};
      \node [block, right =0.2cm of connected] (data_2) {DATA};
      \node [block, right =0.2cm of data_2] (data_all) {30 x \\ DATA cells};
      \node [block, right =0.2cm of data_all] (end) {END};

  Paths
      \path[draw,->] (extend2_1) edge (extended_1)
                  (extended_1) edge (begin_dir)
                  (begin_dir) edge (data_1)
                  (data_1) edge (connected)
                  (connected) edge (data_2)
                  (data_2) edge (data_all)
                  (data_all) edge (end)
                  ;
  \end{tikzpicture}

  \tikzset{>=latex}
  \begin{tikzpicture}[every node/.style={scale=0.81}]
      \tikzset{
        title/.style= {align=left,minimum width=0.7cm,minimum height=0.5cm},
        block/.style= {draw, rectangle, align=center,minimum width=0.7cm,minimum height=0.5cm},
        rblock/.style={draw, shape=rectangle,double,double distance=0.02cm,align=center,minimum width=1cm,minimum height=0.5cm},
      }
      \node [title] (title)  at (0,-0.03) {Introduction handshake:};
      \node [rblock] at (2.43,0) (extend2_1) {EXTEND2};
      \node [block, right =0.2cm of extend2_1] (extended_1) {EXTENDED};
      \node [rblock, right =0.2cm of extended_1] (intro1) {INTRODUCE1};
      \node [block, right =0.2cm of intro1] (intro_ack) {INTRO\_ACK};

  aths
      \path[draw,->] (extend2_1) edge (extended_1)
                  (extended_1) edge (intro1)
                  (intro1) edge (intro_ack)
                  ;
  \end{tikzpicture}
  \vspace{5px}

  \tikzset{>=latex}
  \begin{tikzpicture}[every node/.style={scale=0.81}]
      \tikzset{
        title/.style= {align=left,minimum width=0.7cm,minimum height=0.5cm},
        block/.style= {draw, rectangle, align=center,minimum width=0.7cm,minimum height=0.5cm},
        rblock/.style={draw, shape=rectangle,double,double distance=0.02cm,align=center,minimum width=1cm,minimum height=0.5cm},
      }

      \node [title] at (0,-0.03) (title) {Rendezvous handshake:};
      \node [rblock] at (3,0) (establish_rend) {ESTABLISH\_REND};
      \node [block, right =0.2cm of establish_rend] (rend_established) {REND\_ESTABLISHED};
      \node [block, right =0.2cm of rend_established] (rend2) {REND2};

      Paths
      \path[draw,->] (establish_rend) edge (rend_established)
                  (rend_established) edge (rend2)
                  ;
  \end{tikzpicture}

  \caption{Cell sequences of HSDir / Intro / Rendezvous handshakes after the circuit has finished extending to \emph{Middle 2}}
  \label{fig:dummy-patterns}
\end{figure*}

\section{Onion Handshakes}\label{sec:design}

Our previous experiments have shown that we must convincingly imitate the
onion service setup protocol in its entirety.  Onion connections are
fundamentally different from clearnet connections in that they involve three
extra handshakes: the HSDir fetch, the introduction and the rendezvous. If the
adversary can fingerprint one of those onion handshakes, they immediately
fingerprint the entire connection.

These three handshakes occur on separate circuits and follow the timeline
displayed in \autoref{fig:onion-handshake}.  First the onion descriptor is
fetched from an HSDir, followed by the first half of the rendezvous handshake,
followed by the introduction and finally the second half of the
rendezvous.  After all handshakes are complete, the rendezvous
circuit is used for transferring application data, while the other two circuits
are destroyed.

To avoid distinguishing onion from clearnet connections,
we need to generate \emph{dummy} HSDir, introduction and rendezvous handshakes.
We refer to these three handshakes combined as a \emph{dummy onion handshake}
or \emph{dummy triplet}.
In this section we describe how to generate dummy handshakes using
adaptive padding, while maintaining both the \emph{cell sequence} and the
\emph{timing} characteristics of the corresponding real ones.  In the
following sections, we use onion handshakes as a building block to design two
different defense strategies.

\paragraph{Cell sequence}

All onion handshakes have deterministic cell sequences
(\autoref{fig:dummy-patterns}) which allows us to accurately recreate them using
padding machines. For instance, consider a real
introduction handshake displayed in
\autoref{fig:intro-circuit-cell-flow}. First, the circuit is
extended to \emph{Middle 2} before the user connection arrives
(such \emph{preemptive} circuits improve performance;
they are displayed as \emph{dotted lines} and further discussed in
\autoref{sec:pcp}).
Then, the introduction involves two
round-trips to the Intro Point, one to extend the circuit and another one to
establish the intro point.

To generate a dummy introduction handshake, we use a padding machine running on
\emph{Middle 1}. First, the circuit is preemptively extended to \emph{Middle 2}
as in the paragraph above.  Then, for each of the two round-trips to the Intro
Point, we instruct the client padding machine to send a padding cell to
\emph{Middle 1}, to which the padding machine of \emph{Middle 1} responds with
its own padding cell. Using this method, we effectively fake the communication
between the client and the Intro Point.

The same technique can be used for HSDir and rendezvous handshakes; the exact
sequences 
are shown in \autoref{sec:cell-diagrams}. Padding machines using our
framework that implement a dummy introduction handshake can be seen in
\autoref{sec:padding-machines}.

\paragraph{Timing analysis}\label{sec:handshake-timing}

To successfully generate dummy handshakes, our padding machines need to properly
imitate the delays of real ones, both between cells of the same circuit and
between different circuits.
Revisiting the introduction handshake of \autoref{fig:intro-circuit-cell-flow},
the only difference between a real and a dummy handshake is the communication
bentween Middle~1 and the Intro Point, which is faked by the padding machine.
(dashed lines in \autoref{fig:intro-circuit-cell-flow}).
Hence the machine in Middle~1
needs to emulate the delay between the mement EXTEND is sent and EXTENDED 
is received from the Intro Point (and similarly for
INTRODUCE1/INTRODUCE\_ACK).
Since the time to process the handshake is neglible compared to network times,
we essentially need to imitate the round-trip delay between \emph{Middle 1} and the Intro Point.

Delays of the same nature are exhibited on HSDir and rendezvous circuits. For the
latter, we need to imitate both the round-trip
between Middle 1 and Rend, but also the delay between REND\_ESTABLISHED and
REND\_2 (\autoref{fig:rendezvous-circuit-cell-flow}).

Finally, note that the communication in each circuit starts when the previous
one finishes (\autoref{fig:onion-handshake}). Hence, if \emph{inter-cell} delays are
properly imitated, the \emph{inter-circuit} delays will also match with no extra
effort.


\paragraph{Faking inter-cell delays}\label{sec:timing-experiments}

\begin{figure}
	\centering
	\includegraphics[width=\columnwidth]{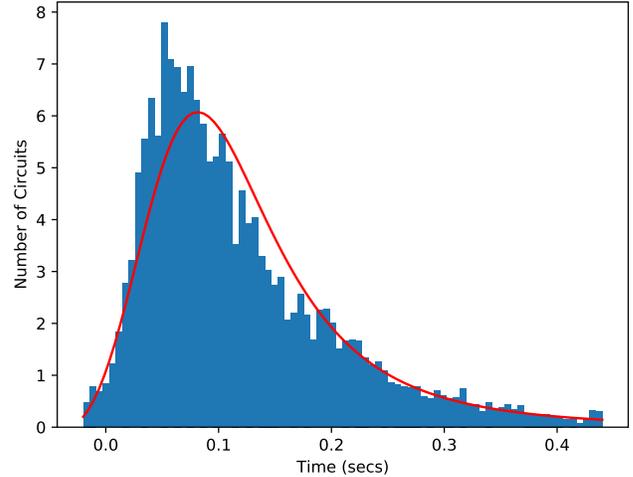}
	\caption{Normalized histogram of round-trip delay between Middle 1 and Intro Point. The red line is a fitted PDF of the LogLogistic(alpha=3.19, beta=0.14) probability distribution}
	\label{fig:loglogistic-fit}
\end{figure}

Using our methodology from \autoref{sec:timing-methodology}, we collected a
dataset of the relevant round-trip delays for each pair of cells in onion
handshakes. We fitted the resulting distributions in a variety of well-known
ones, and found that they can be most accurately approximated by
a Log-Logistic model \cite{log-logistic}.
An example can be seen in
\autoref{fig:loglogistic-fit} for the round-trip delay between INTRODUCE1 and
INTRODUCE\_ACK and a dataset of 10,000 circuits.

To experimentally verify our findings we created a dataset of dummy onion
handshakes whose inter-cell delays were sampled from the corresponding fitted
Log-Logistic distributions. We then ran an experiment
against cell traces of real data, in which the +1/-1 feature vector is replaced
by the exact inter-cell delays. The resulting accuracy was at most 0.52 in all types of
circuits, showing that our ML adversary could not distinguish dummy onion
handshakes from real ones, despite the availability of detailed inter-cell
timing information.

The next question is \emph{when} exactly to perform these dummy
handshakes, and on \emph{which circuits}. We answer this question in two
different ways, giving rise to two padding strategies discussed in the next
sections. First,
we inject dummy handshakes only in the clearnet traffic
while delaying the actual data. Second, we
avoid delays by injecting dummy handshakes \emph{preemptively},
to both clearnet and onion traffic.


\section{The fractional-delay strategy}\label{sec:fractional-delay}

We first present a simple and intuitive strategy which pads clearnet
connections to make them look like onion connections: when Tor receives a
clearnet request from the user, we \emph{pause} that request and inject a dummy
onion handshake, as described in \autoref{sec:design}.  When the dummy handshake
is complete, the clearnet request is \emph{resumed} by sending the application
data over the rendezvous-handshake circuit.

More precisely, when the user tries to access a destination on an exit circuit,
the Tor client delays the request and promptly creates two new two-hop
circuits, extended to random middle nodes%
\footnote{ Note that if preemptive circuits are available (see
  \autoref{sec:pcp}), they can be used for this purpose. }. We then use the
techniques from \autoref{sec:design} to inject a dummy HSDir handshake to the
first circuit and a dummy introduction handshake to the second
circuit. Finally, the actual exit circuit for the clearnet connection is
created, but we first instruct our padding machine to inject a dummy rendezvous
handshake to the circuit \emph{before} the actual traffic. After the dummy
rendezvous handshake is complete, the application data is sent as usual.  This
procedure, overall, creates three circuits whose traffic looks
indistinguishable from a real HSDir, introduction and rendezvous triplet.

This technique makes clearnet connections indistinguishable from onion ones
with respect to both the shape (size and cell sequence) and the timing of their
traffic. The reason is that the dummy handshake itself is indistinguishable
from a real one wrt both its shape and timing (\autoref{sec:design}), and
moreover, it is injected at the exact moment where the real handshake appears
in an onion connection.

In the remaining of this section, we first experimentally evaluate the
effectiveness of this strategy. Then, we introduce \emph{fractional-delay}, a
simple improvement of this strategy, which reduces latency without sacrificing
privacy by padding only a fraction of clearnet connections.

\subsection{Experimental evaluation}\label{sec:evaluation-delay}

In this section we perform a thorough evaluation of this strategy, which
reveals previously unknown aspects of the network that can act as fingerprints.

As previously discussed, the need to delay traffic makes this defense
unimplementable in the currently deployed framework of \autoref{sec:wtf}. As a
consequence, we evaluate the defense by \emph{simulating} its padding as part
of our experiment pipeline, as follows.

We call \emph{fake} a circuit with only dummy handshakes sent on it, and
\emph{padded} a circuit with both dummy and real data.
First, for each \emph{exit circuit} in our dataset, we create two new circuits,
a \emph{fake HSDir} and a \emph{fake introduction} one, simulating those used
in the defense. We inject a dummy HSDir handshake to the first circuit and a
dummy introduction handshake to the second one, and add them to the
dataset. Then, we pad the exit circuit itself, injecting a dummy rendezvous
handshake right before the application data.  Note that this defense only adds
padding to clearnet connections, the onion traffic is not modified at all.

\begin{table*}[t]
	\caption{Accuracy results (delay-based defense)}
	\centering
	\begin{tabular}{|c|c|c|c|c|c|c|}
	\hline
	\multirow{2}{*}{} & \multicolumn{2}{c|}{\textbf{Fake-HSDir vs HSDir}} & \multicolumn{2}{c|}{\textbf{Fake-Intro vs Intro}} & \multicolumn{2}{c|}{\textbf{Padded-Exit vs Rend}} \\
	\cline{2-7}
	& \begin{tabular}{@{}c@{}}Dec. Tree\end{tabular} & \begin{tabular}{@{}c@{}}SVM\end{tabular} & \begin{tabular}{@{}c@{}}Dec. Tree\end{tabular} & \begin{tabular}{@{}c@{}}SVM\end{tabular} & \begin{tabular}{@{}c@{}}Dec. Tree\end{tabular} & \begin{tabular}{@{}c@{}}SVM\end{tabular} \\
	\hline
	Multi-Closed & 0.51 & 0.51 & 0.49 & 0.49 & 0.62 & 0.79 \\
	\hline
	Multi-Open & 0.52 & 0.52 & 0.50 & 0.50 & 0.50 & 0.49 \\
	\hline
	\end{tabular}
	\label{Tab:experiment3_accuracy}
\end{table*}

In contrast to previous experiments, using this defense both clearnet and onion
connections have three circuits. As a consequence, in our experiments we
evaluate the adversary's ability to distinguish each of the three pairs of
circuits independently. More precisely we employ three classifiers: first, a
classifier which separates \emph{fake HSDir} from \emph{real HSDir} circuits.
Second, a classifier which separates \emph{fake introduction} from \emph{real
introduction} circuits, and third a classifier which separates \emph{padded
exit} from \emph{rendezvous} circuits.

In terms of the feature space of this experiment, we focus on the entire cell
sequence of the circuits involved (encoded as an array of \emph{-1} and
\emph{+1} elements). Since all Tor cells have the same size, by using the
entire cell sequence we also inherit any other features related to bandwidth.

\paragraph{Evaluation results}

We first evaluated these classifiers in a \emph{multi-open} and a
\emph{multi-closed} dataset consisting of ten different websites (dataset sizes
are shown in \autoref{sec:dataset-sizes}). The results, displayed in
\autoref{Tab:experiment3_accuracy}, show that the defense was effective in
completely hiding HSDir and introduction circuits in all scenarios.  For
rendezvous circuits, the defense was also successful in the \emph{open world}
scenario, where the classifier had not seen the same sites during training.  In
the \emph{closed world} scenario, however, the adversary has non-negligible
accuracy and precision (Appendix, \autoref{Tab:experiment3_statistics}).

\begin{figure}
	\centering
	\includegraphics[width=\columnwidth]{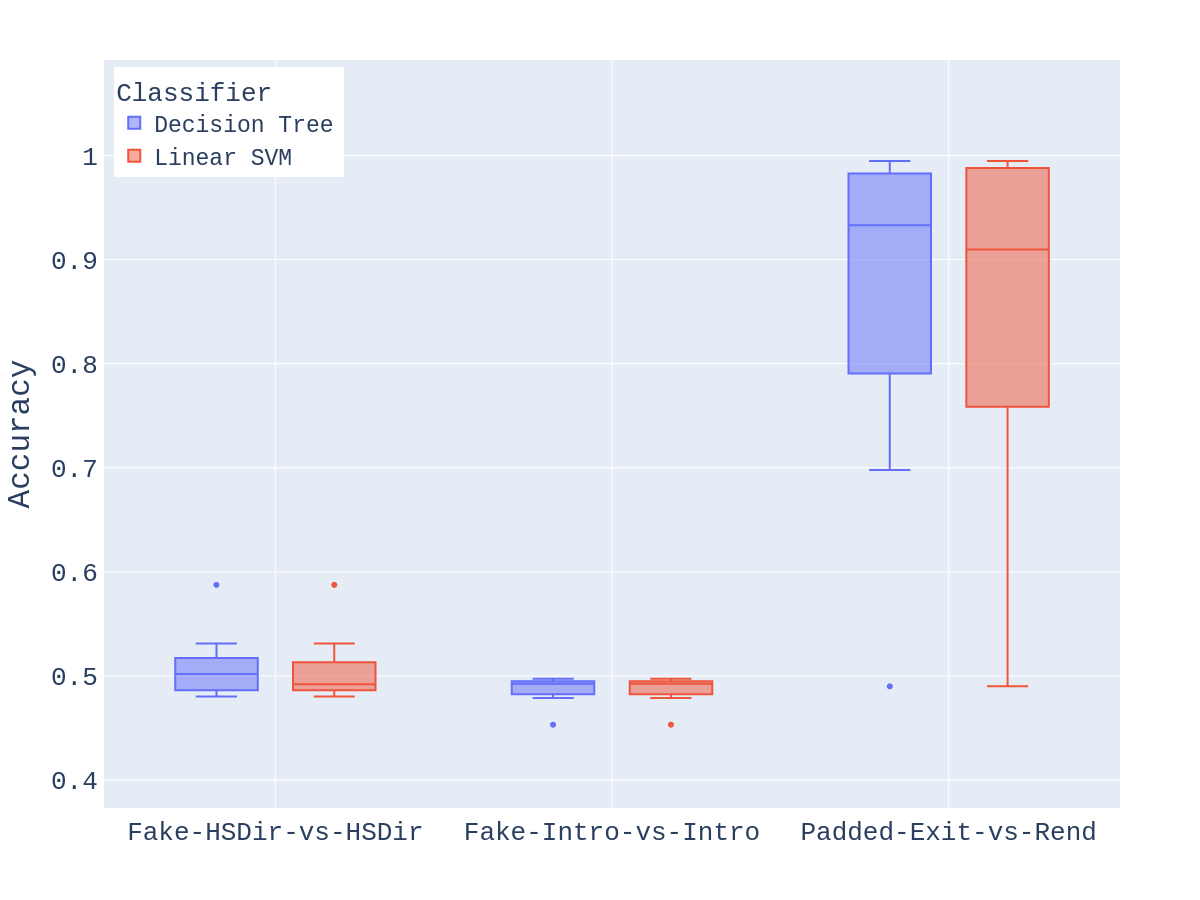}
	\caption{Single site accuracy results (delay-based defense)}
	\label{fig:experiment3_single_boxplot}
\end{figure}

To get a better understanding we then performed the same experiment in the
\emph{single-site} case, for each one of the same ten websites individually. The
results, displayed in \autoref{fig:experiment3_single_boxplot}, highlight the
same behavior: HSDir and introduction circuits are hidden, but rendezvous
circuits can be distinguished from padded exit ones for most websites with high
accuracy (reaching even $100\%$).

This behavior is quite surprising, since in principle the cell sequence of
padded exit circuits is identical to those of rendezvous circuits. By examining
the fingerprints used by our classifiers we managed to identify a surprising
find that causes this behavior. In all experiments we tried to keep the
\emph{application-layer} data identical, by having identical content served
from web servers with identical configurations.  Still, the actual
application-layer traffic did not seem to match in terms of cell sequences. We
found the following two major issues that caused this unexpected phenomenon.

\paragraph{Cell packing difference}

The biggest fingerprint exploited by our classifiers is the fact that
\emph{cell packing is different} between clearnet and onion connections.
Specifically, in the onion case, the cell packing is done by the \emph{onion
  server}, who typically receives the data from a web server running on the
\emph{same machine}.  This means that the web server transmits
application-layer data to the onion server via a high-bandwidth localhost
connection, allowing the provider to optimally package that application-layer
traffic into cells.

However, in the clearnet case, the cell packing is done by the \emph{exit
node}, who receives the data from the destination web server over a remote
connection, possibly from a different continent. As a consequence,
unpredictable networks delays are added to the application-layer traffic
before it reaches the exit node, causing cells to be packed more loosely, but
also a higher \emph{variance} in the resulting number of cells.

The number of cells observed in our experiments for a specific website are
shown in the Appendix (\autoref{fig:cell_dif_exp3}).  Rendezvous circuits are
clearly packed more densely, but also have less variance in their cell count.

\paragraph{Latency influences the application-layer data}

Another, more subtle but present fingerprint was the fact that the inherent
latency difference between exit and rendezvous circuits (due to the different
number of hops) was creating fingerprintable patterns on application-layer
cell sequences. In particular, we found that, for complex websites, the Tor
Browser's scheduling of the various page resources (images, scripts, etc) was
sensitive to latency differences, causing, for instance, images to be
predictably loaded in a different order over clearnet connections than over
onion ones. Note that Tor Browser opens multiple TCP connections to fetch
resources, which is important for this fingerprint to appear.
However, the exact reason why the scheduling order is so predictably different
is unclear and left as future work.

Finally, it should be noted that the two fingerprints described in the previous
sections both lie in the application data, not the onion service protocol
itself. As a consequence, they highly depend on the traffic patterns of each
individual website, which is why the defense was effective in the open-world
scenario, but ineffective in the closed-world (\autoref{Tab:experiment3_accuracy}).

\subsection{Bypassing the hurdles}\label{sec:experiment4}

\begin{table*}[t]
	\caption{Multi site accuracy results (delay-based strategy over localhost)}
	\centering
	\begin{tabular}{|c|c|c|c|c|c|c|}
	\hline
	\multirow{2}{*}{} & \multicolumn{2}{c|}{\textbf{Fake-HSDir vs HSDir}} & \multicolumn{2}{c|}{\textbf{Fake-Intro vs Intro}} & \multicolumn{2}{c|}{\textbf{Padded-Exit vs Rend}} \\
	\cline{2-7}
	& \begin{tabular}{@{}c@{}}Dec. Tree\end{tabular} & \begin{tabular}{@{}c@{}}SVM\end{tabular} & \begin{tabular}{@{}c@{}}Dec. Tree\end{tabular} & \begin{tabular}{@{}c@{}}SVM\end{tabular} & \begin{tabular}{@{}c@{}}Dec. Tree\end{tabular} & \begin{tabular}{@{}c@{}}SVM\end{tabular} \\
	\hline
	Multi-Closed & 0.49 & 0.49 & 0.49 & 0.49 & 0.48 & 0.48 \\
	\hline
	Multi-Open & 0.50 & 0.50 & 0.50 & 0.50 & 0.50 & 0.50 \\
	\hline
	\end{tabular}
	\label{Tab:experiment4_accuracy}
\end{table*}

\begin{figure}
	\includegraphics[width=\columnwidth]{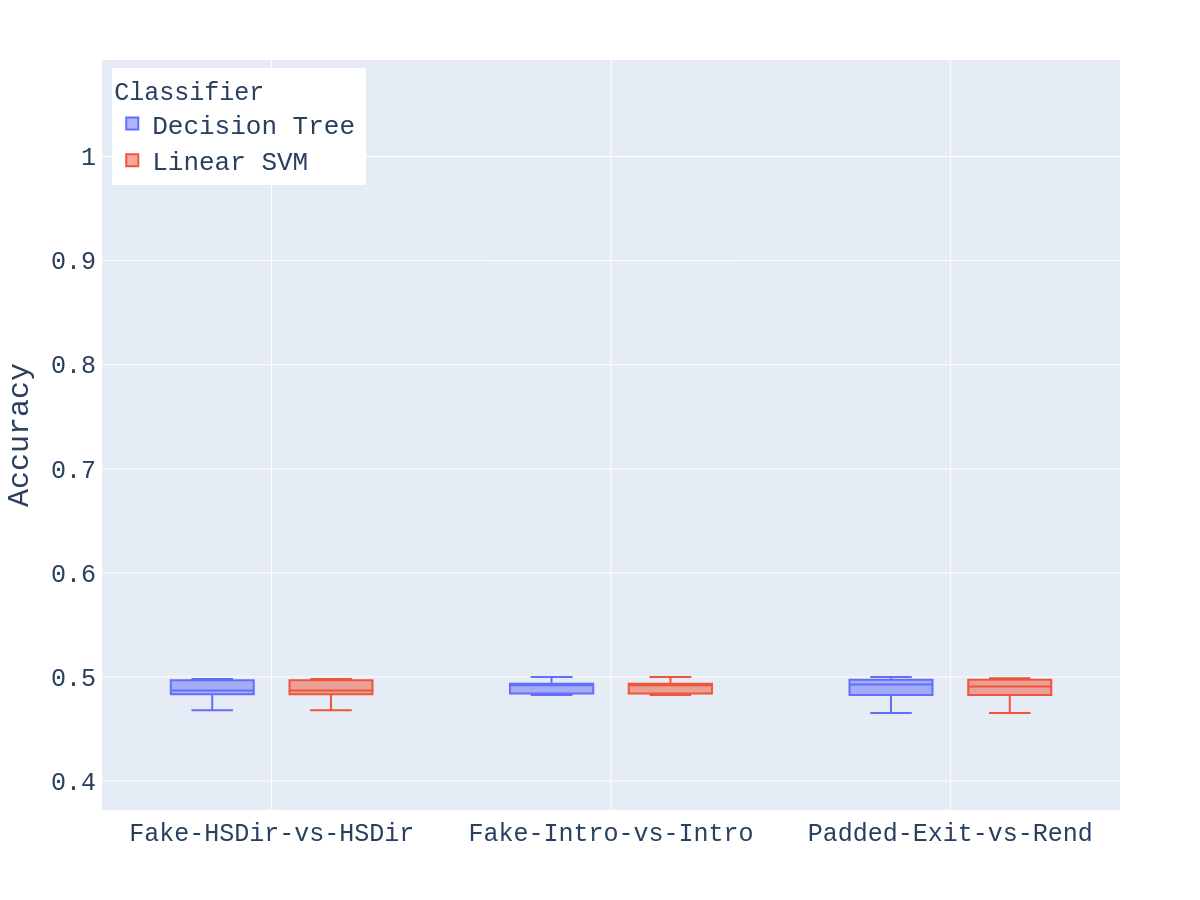}
	\caption{Single site accuracy results (delay-based strategy over localhost)}
	\centering
	\label{fig:experiment4_single_boxplot}
\end{figure}

The two fingerprints described in the previous paragraphs are fundamental,
affecting the majority of onion service setups. Defending against them is
possible but challenging.

With respect to the cell packing fingerprint, we believe that onion servers can
work around it by placing the underlying service endpoint on a remote host. It
is also possible to enhance our padding machines so that they emit additional
padding cells on rendezvous circuits to simulate the non-optimal cell packing
of exit circuits. With respect to the Tor Browser's scheduling fingerprints, a
defense could be to limit the number of concurrent connections to a single one
(however that would negatively impact performance), or to improve the
scheduling algorithm to make it insensitive to latency differences.

Although such defenses are an interesting topic for future work, they are
beyond the scope of this paper, which focuses on the onion service protocol
itself and not on the application layer. For this reason, in the remaining of
this paper we work around the above issues and evaluate our defenses under the
assumption that such issues are not present. In particular, we employ a Tor
network running completely on a single machine, using the Chutney Tor emulator,
avoiding the problem of cell packing differences and ensuring that the
application-layer cell length is the same in both clearnet and onion
connections. Furthermore, we use wget instead of the Tor Browser to avoid the
fingerprints caused by scheduling issues.

With these workarounds applied, we repeated the experiment of
\autoref{sec:evaluation-delay}. The results for the multi-open and multi-closed
scenarios, displayed in \autoref{Tab:experiment4_accuracy}, show that the
defense is now effective in hiding all types of circuits, with an accuracy
close to 0.5 even in the multi-closed case.  Similarly, the results for the 10
individual single-site scenarios, displayed in
\autoref{fig:experiment4_single_boxplot}, also show low classification accuracy
for all circuit types.

\subsection{Fractional delay: exploiting the base rate}
\label{sec:base-rate}

So far, our defense involved generating a dummy handshake triplet for every
clearnet connection, while delaying the actual traffic.  This strategy was
successful in preventing circuit fingerprinting attacks, but introduces considerable latency
to the vast majority of Tor traffic.  We can,
however, employ a simple trick to significantly improve the performance without
sacrificing privacy: pad only a \emph{fraction} of connections. For each clearnet connection we
perform a probabilistic choice; with probability $p$ we generate a dummy onion
handshake while delaying the traffic, while with probability $1-p$ we
open the clearnet connection immediately without any padding.

This might appear as sacrificing privacy; the lack of an onion handshake immediately
reveals that the connection is clearnet, while its existence provides evidence that
the connection is onion. However, an accurate judgment requires taking into
account the \emph{base rate} $c$, defined as the percentage of clearnet connections
in the network.
The number of clearnet and
onion connections in Tor are far from equal;
based on experimental measurements from 2018, the network saw about 216 million
exit circuits \cite{n-exit-circs}, while 15 million successful
rendezvous circuits daily \cite{n-rend-circs}, giving an approximate estimate
of $c = 0.93$.

Let $N$ denote the total number of onion handshakes (either dummy or real) and
$S$ denote the connection type which we wish to keep secret, namely $S=0$ for
clearnet and $S=1$ for onion connections. Note that the base rate is equal to $c
= \Pr[S = 0]$ (prior probability).  As already shown, the shape
of an onion handshake can be easily observed by the adversary, but real and
dummy handshakes are indistinguishable. As a consequence, in the following
analysis we assume that $N$ is the only information available to the adversary;
since onion connections always produce a (real) handshake, while clearnet ones
produce a (dummy) handshake with probability $p$, we have that
$Pr[N{=}1 \ |\ S{=}1] = 1$ and $Pr[N{=}1\ |\ S{=}0] = p$.

This leads us to the following result:

\begin{restatable}{theorem}{OptimalAccuracy}\label{thm:optimal-accuracy}
	The accuracy of the optimal classifier predicting
	$S$ from $N$ is equal to
	$ \max \{c, 1 - cp \} $.
\end{restatable}

Note that $c$ is the probability of being correct when guessing blindly.
Essentially, when $N=0$ the adversary has a clear choice (we must have $S=0$).
The case $N=1$ however, is harder: onion connections do produce $N=1$ with higher
probability, but they are far less likely to occur.
If $p \ge \frac{1}{c}-1$, it is in fact more likely that a connection is clearnet,
even if we observe $N=1$.%
\footnote{Note that if we set $p =\frac{1}{c}-1$, the fraction of \emph{padded} connections (wrt all)
will be $cp = 1-c$, exactly equal to the fraction of \emph{onion} connections.}
As a consequence, the optimal adversary always
believes that the connection is clearnet, regardless of $N$, hence he has no advantage
whatsoever over blind guessing.

\paragraph{Tradeoff between privacy and latency/bandwidth}

The tradeoff between privacy and latency is displayed in
\autoref{fig:fractional-leakage-latency}.  Note that the adversary can achieve
an accuracy of $c$ simply by \emph{blind guessing}; but having high accuracy
simply because $c = 0.9$ is not really a problem of the system.  As a
consequence, we follow a standard approach from Quantitative Information Flow
\cite{Smith09} and compute the system's \emph{leakage}, defined as ``accuracy
$-$ prob. of blind guessing''.  We see that for large values of $c$, we can
actually achieve \emph{zero} leakage while padding only a small fraction of
connections.

Regarding the \emph{latency} overhead, our strategy adds a delay to each exit
circuit equal to the time it takes to complete an onion service
handshake. In our experiments (\autoref{sec:handshake-timing}) we found
that onion service handshakes have a mean time of around 3.34 seconds (with a
median time of 2.8 seconds and standard deviation of 1.64 seconds).  Since we
dummy handshake is produced with probability $p$, and only for clearnet connections,
the expected latency
overhead will be $cp \cdot 3.34$ seconds/connection, displayed in
\autoref{fig:fractional-leakage-latency}.  Regarding bandwidth, a single onion
handshake requires 44 cells (22.5 KB) to complete, leading to
an expected overhead of $cp \cdot 22.5$ KB/connection.

As a realistic example, setting $c=0.9$ (slightly lower than the $0.93$ estimate from 2018),
we can completely hide the connection type (zero leakage) by padding only
$p =\frac{1}{c}-1 \approx 11\%$ of traffic, leading to
approximately 334 msecs of latency and 2.25 KB of bandwidth overhead per connection.

\begin{figure}
	\centering
	\includegraphics[width=.9\columnwidth]{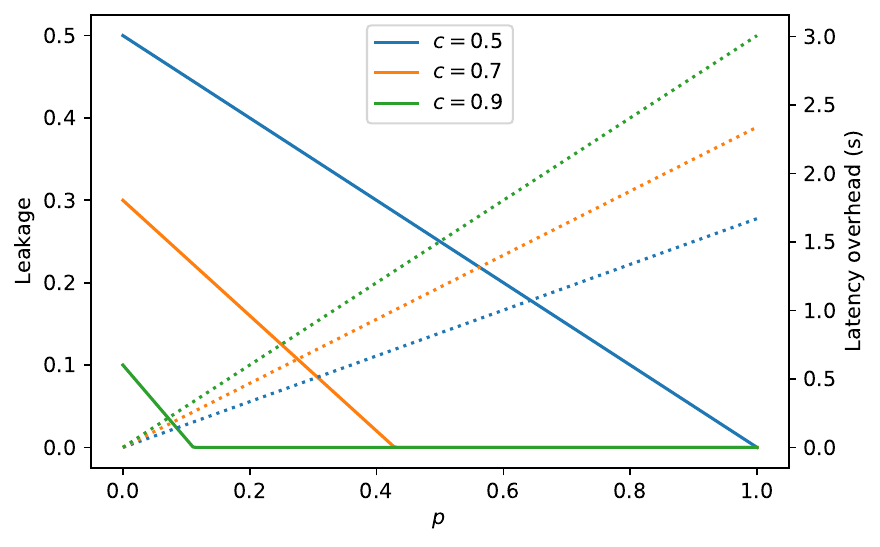}
	\vspace{-10pt}
	\caption{
		Leakage (solid lines) and latency overhead (dotted lines) for the fractional-delay strategy.
	}
	\label{fig:fractional-leakage-latency}
\end{figure}


\section{The zero-delay PCP strategy}\label{sec:pcp}

\begin{figure*}
  \tikzset{>=latex}
  \begin{tikzpicture}
      \tikzset{block/.style= {draw, rectangle, align=center,minimum width=0.6cm,minimum height=0.4cm},
      rblock/.style={fill, shape=rectangle,double,double distance=0.02cm,align=center,minimum width=0.6cm,minimum height=0.4cm},
      }

      \node [block] (preemptive1a) {};
      \node [block, label={preemptive setup}, right=0.0cm of preemptive1a] (preemptive1b) {};
      \node [block, right =0.0cm of preemptive1b] (preemptive1c) {};

      \node [block, right =1.3cm of preemptive1c] (dummy1a) {};
      \node [block, label={dummy handshake}, right=0.0cm of dummy1a] (dummy1b) {};
      \node [block, right =0.0cm of dummy1b] (dummy1c) {};

      \node [block, right =1.3cm of dummy1c] (dummy2a) {};
      \node [block, label={dummy handshake}, right=0.0cm of dummy2a] (dummy2b) {};
      \node [block, right =0.0cm of dummy2b] (dummy2c) {};

      \node [block, right =1.3cm of dummy2c] (dummy3a) {};
      \node [block, label={dummy handshake\phantom{y}}, right=0.0cm of dummy3a] (dummy3b) {};
      \node [block, right =0.0cm of dummy3b] (dummy3c) {};

      \node [block, right =1.2cm of dummy3c] (request) {application-layer clearnet traffic};

      \path[draw,->] 
                  (preemptive1c) edge (dummy1a)
                  (dummy1c) edge (dummy2a)
                  (dummy2c) edge (dummy3a)
                  (dummy3c) edge (request)
                 ; 
  \end{tikzpicture}
  \begin{tikzpicture}
      \tikzset{block/.style= {draw, rectangle, align=center,minimum width=0.6cm,minimum height=0.4cm},
      rblock/.style={fill, shape=rectangle,double,double distance=0.02cm,align=center,minimum width=0.6cm,minimum height=0.4cm},
      }

      \node [block] (preemptive1a) {};
      \node [block, label={preemptive setup}, right=0.0cm of preemptive1a] (preemptive1b) {};
      \node [block, right =0.0cm of preemptive1b] (preemptive1c) {};

      \node [block, right =1.3cm of preemptive1c] (dummy1a) {};
      \node [block, label={dummy handshake}, right=0.0cm of dummy1a] (dummy1b) {};
      \node [block, right =0.0cm of dummy1b] (dummy1c) {};

      \node [block, right =1.3cm of dummy1c] (dummy2a) {};
      \node [block, label={dummy handshake}, right=0.0cm of dummy2a] (dummy2b) {};
      \node [block, right =0.0cm of dummy2b] (dummy2c) {};

      \node [block, right =1.3cm of dummy2c] (dummy3a) {};
      \node [block, label={real handshake\phantom{y}}, right=0.0cm of dummy3a] (dummy3b) {};
      \node [block, right =0.0cm of dummy3b] (dummy3c) {};

      \node [block, right =1.2cm of dummy3c] (request) {application-layer onion traffic};

      \path[draw,->] 
                  (preemptive1c) edge (dummy1a)
                  (dummy1c) edge (dummy2a)
                  (dummy2c) edge (dummy3a)
                  (dummy3c) edge (request)
                  ;
  \end{tikzpicture}
  \caption{Example of padded clearnet and onion connections, containing dummy and real handshake triplets.}
  \label{fig:padded-connections}
\end{figure*}
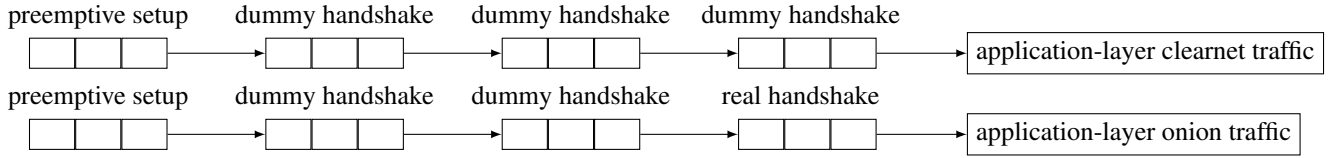

In \autoref{sec:design} we showed that we can construct dummy onion handshakes
that perfectly match both the shape and the timing characteristics of real
ones.  The natural way to use these dummy handshakes is to inject them at the
beginning of a clearnet connection while delaying the actual traffic; this
strategy was shown in \autoref{sec:fractional-delay} to be an effective way of
making clearnet connections indistinguishable from onion ones.

Although the delay can be significantly reduced via \emph{fractional} delay
(\autoref{sec:base-rate}), one might wish to provide a defense without any
delay whatsoever.  If we are not allowed to delay traffic, by the time the
clearnet request arrives it's already \emph{too late} for providing a defense.
But how can we perform dummy handshakes \emph{before} the application data even
arrives? The key insight here is use \emph{preemptive circuits}, a mechanism
that already exists in Tor for performance reasons, but which can be exploited
for defending against fingerprinting. In the following sections we detail a
padding strategy we call \emph{Preemptive Circuit Padding} (PCP).

\paragraph{Preemptive padding}

Creating circuits is a costly operation, due to the use of public-key
cryptography and the latency caused by extending the circuit to three hops.
To optimize performance, the Tor client continuously builds \emph{preemptive
circuits}, that are extended to three hops and then stay dormant until a
proper use for them is found. The preemptive circuit can be later used
as an exit circuit, where the third hop acts as an exit node;
similarly, it can also be used as a rendezvous circuit
where the third hop acts as the rendezvous point.

Tor's preemptive circuit subsystem is not formally specified and depends on the
current implementation. The current Tor client counts the number of available
preemptive circuits every second, and creates new ones until a certain
configurable threshold is reached. The client uses preemptive circuits both for
clearnet and for onion connections.

The preemptive circuit mechanism presents a great opportunity to insert
confusion in the time signature of onion services, since it blends the
\emph{time of creation} with the \emph{time of use} of circuits. Using the
preemptive circuit mechanism, Tor creates circuits before the
actual usage of the circuit has been determined, and hence our machines can
inject padding between the user and the middle node. An adversary observing the
traffic will not be able to infer that this is a preemptive circuit still
waiting to be used, and could confuse it with an active onion circuit that is
completing its onion handshake. Of course, for this defense to be successful,
the padding should be properly constructed, taking also into account the actual
traffic that will pass through the circuit after the preemptive phase.

In the following sections we detail the use of this technique to defend against
circuit fingerprinting. Note that the use of preemptive padding is not limited
to circuit fingerprinting; we envision that it could be useful for building defenses
against guard discovery or traffic confirmation
attacks\cite{traffic-confirmation}.


\paragraph{The zero-delay PCP strategy}

This strategy involves injecting dummy onion handshakes (\autoref{sec:design})
in the preemptive phase. Since each handshake is in fact a \emph{triplet}
(HSDir, Intro and Rendezvous), we employ three preemptive circuits, each having
a distinct role:

\begin{itemize}
  \setlength\itemsep{0.06em}
  \item \emph{HSDir preemptive circuits} carry dummy and real
  HSDir handshakes.
  \item \emph{Introduction preemptive circuits} carry dummy and real
  introduction handshakes.
\item \emph{Exit/Rendezvous preemptive circuits} carry dummy and real
  rendezvous handshakes, as well as (clearnet or onion) application data.
\end{itemize}

Upon creating these three preemptive circuits, we use padding machines
(\autoref{sec:design}) to inject dummy onion handshakes in the corresponding
preemptive circuits.  At this moment, the actual user connection has not
arrived yet, and we know neither when it will arrive, nor whether it
will be a clearnet or onion connection.  As a consequence, we keep
\emph{repeating} triplets of dummy handshakes in the preemptive circuits, using
a \emph{random delay} between them.  When the user connection arrives, we
proceed depending on its type.
\begin{itemize}
  \setlength\itemsep{0.06em}
\item Clearnet connections: the HSDir and introduction preemptive circuits
  are terminated, and the Exit/Rendezvous preemptive circuit is used as an exit
  circuit.
\item Onion connections: the preemptive circuits are used as HSDir,
  introduction and rendezvous circuits respectively, carrying the real handshakes
  and the application data.
\end{itemize}

The resulting traffic pattern is displayed in \autoref{fig:padded-connections}.
Both clearnet and onion connections contain dummy onion handshake triplets as
well as application-layer data, while onion connections also contain the real
onion handshake.  Since dummy and real handshakes have identical shape an
adversary cannot distinguish the two cases.

Note that the number of repeated dummy handshakes and the
\emph{delay between them} are crucial parameters for this defense to be effective.
Let $\lambda_u$ be the expected number of connections requested by the user per unit of
time (rate), which depends on the user behavior.  We propose to use a random
delay between dummy handshakes sampled from an \emph{exponential distribution}\footnote{%
   We could alternatively use some heavy-tailed distribution (eg Pareto).
}
with rate $\lambda_d = \varphi\cdot\lambda_u$, where $\varphi$ is a parameter of the
defense.  Intuitively, $\lambda_d$ expresses the mean number of dummy handshakes
per unit of time, while $\varphi$ expresses the mean number of dummy handshakes
\emph{per user connection}.  Larger values of $\varphi$ offer better privacy at
the cost of extra bandwidth.

We can imagine several ways to implement this defense in practice. For instance,
one could use the fractional-delay defense from \autoref{sec:fractional-delay}, until the system's usage increases
and low-latency is required. Then Tor can switch to PCP using
the $\lambda_u$ most recently observed, which can be then dynamically adjusted
on regular intervals. If the system becomes idle, we can switch back
to fractional-delay to reduce bandwidth overhead.

The analytic and experimental evaluations discussed in the remaining of this
section show that, under assumptions on the user's thinking time, properly
choosing $\varphi$ can offer significant defense against circuit fingerprinting
at a reasonable bandwidth cost.

\subsection{Analytic evaluation}\label{sec:evaluation-pcp}

We start with an analytic evaluation, obtaining an expression for the
accuracy of an optimal adversary which is useful for tuning the parameters of
the defense.

In the example of \autoref{fig:padded-connections},
the adversary can observe that three handshakes took place,
since each one has a distinguishable traffic pattern (\autoref{fig:dummy-patterns}),
but cannot tell whether we have three dummy handshakes, or two dummy and a real one.
Although the two connection instances appear
indistinguishable to the adversary, PCP
will only probabilistically produce such identical instances, so the
\emph{probability} to produce each one of them can be used to 
infer the type of connection.

Following the notation of \autoref{sec:base-rate},
let $D$ denote the number of \emph{dummy} handshakes sent on the
circuits (whiled $N$ denotes the \emph{total} number of handshakes, either dummy or
real). The adversary cannot observe $D$ directly (dummy and real handshakes
look identical), but can potentially observe $N$. In the case of
a clearnet connection we have $N = D$, while for onion connection we have
$N = D+1$ (because of the additional real handshake), so after
observing $N$ we can use the distribution of $D$ to infer which type of
connection is more likely. In the example of \autoref{fig:padded-connections},
the adversary observes $N=3$ but does not know whether we have a clearnet
connection with $D=3$ or an onion one with $D=2$. So, the probabilities
$\Pr[D=2]$ and $\Pr[D=3]$ need to \emph{similar}; if we perform too few dummy
handshakes, and $D=3$ has negligible probability, the adversary can infer that we
have an onion connection.

The first step in our analytic evaluation is to study the distribution of $D$.
Since dummy
handshakes are repeated until the user connection arrives, $D$ crucially depends
on the user's \emph{think time}, that is the time between consecutive handshakes
for a new connection. As our user model, we assume that the think time follows
an \emph{exponential} distribution with rate $\lambda_u$;
this is a commonly used model, although
not always accurate \cite{PaxsonF95}.  Recall that the
time between dummy handshakes is
also sampled from an exponential distribution with rate $\lambda_d$; the
question here is to compute how many dummy handshakes will be injected while
waiting for the user connection to arrive.

Intuitively, this number $D$ depends on the relationship between $\lambda_u$ and 
$\lambda_d$. If $\lambda_d$ is much larger it means that we are generating
dummies more quickly than the user is opening connections, so $D$ will be large;
similarly, if $\lambda_u$ is much larger then $D$ will be small.
In fact, we show that the distribution of $D$ solely depends on their
ratio
$\varphi = \nicefrac{\lambda_d}{\lambda_u}$, which brings us to the following
result (all proofs are in \autoref{sec:proofs}). 

\begin{restatable}{theorem}{DummyGeometric}\label{thm:dummy-geometric}
	The number $D$ of injected dummy triplets
	follows a geometric distribution with
	parameter $\nicefrac{1}{1+\varphi}$, that is
	$$\Pr[D = k] = p(1-p)^{k} \quad \text{for }p = \textstyle\frac{1}{1+\varphi},k \ge 0 ~.$$
\end{restatable}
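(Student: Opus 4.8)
The plan is to recognize $D$ as the number of events of one Poisson process that occur before an independent exponential ``deadline'', which is a classical situation yielding a geometric law. Concretely, under the assumptions of \autoref{sec:analytic} the injection times of successive dummy triplets form a renewal process with i.i.d.\ $\mathrm{Exp}(\lambda_d)$ gaps---hence a homogeneous Poisson process of rate $\lambda_d$---started when the (fresh) preemptive triplet is created; the user connection arrives after an independent think time $T_u \sim \mathrm{Exp}(\lambda_u)$; and $D$ is exactly the number of dummy-injection times that fall in $[0,T_u)$. First I would set up this model carefully, stressing that the assumption ``a preemptive triplet is created when the previous one is used'' is what lets us treat the dummy schedule as a single, continuously running Poisson process without edge effects, and that the independence of $T_u$ from that schedule is precisely what the race argument needs.

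Then I would compute the law of $D$ via the memoryless race. By the competing-exponentials property, at any point the first of the two clocks to ring is the next dummy injection with probability $\frac{\lambda_d}{\lambda_d+\lambda_u}$ and the user arrival with probability $\frac{\lambda_u}{\lambda_d+\lambda_u}$; in the former case, memorylessness makes the process after that instant a fresh copy of the original. This yields $\Pr[D=0] = \frac{\lambda_u}{\lambda_d+\lambda_u}$ and the recursion $\Pr[D=k] = \frac{\lambda_d}{\lambda_d+\lambda_u}\Pr[D=k-1]$, whose solution is $\Pr[D=k] = \big(\frac{\lambda_u}{\lambda_d+\lambda_u}\big)\big(\frac{\lambda_d}{\lambda_d+\lambda_u}\big)^k$ for $k\ge 0$.

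As a cross-check I would also give the direct route: the $k$-th dummy-injection time is $\mathrm{Gamma}(k,\lambda_d)$-distributed, so $\Pr[D\ge k] = \Pr[\mathrm{Gamma}(k,\lambda_d) < T_u]$, and the standard integral $\int_0^\infty \frac{\lambda_d^k s^{k-1}e^{-\lambda_d s}}{(k-1)!}e^{-\lambda_u s}\,ds = \big(\frac{\lambda_d}{\lambda_d+\lambda_u}\big)^k$ gives $\Pr[D=k] = \Pr[D\ge k] - \Pr[D\ge k+1] = \big(\frac{\lambda_d}{\lambda_d+\lambda_u}\big)^k - \big(\frac{\lambda_d}{\lambda_d+\lambda_u}\big)^{k+1}$, the same expression. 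Finally, writing $p := \frac{\lambda_u}{\lambda_d+\lambda_u} = \frac{1}{1+\lambda_d/\lambda_u} = \frac{1}{1+\varphi}$ rewrites this as $\Pr[D=k] = p(1-p)^k$ for $k \ge 0$, which is the claim; in particular this exhibits the earlier-asserted fact that the distribution of $D$ depends on $\lambda_d,\lambda_u$ only through the ratio $\varphi$.

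I do not expect a genuine obstacle in the computation: it is a textbook Poisson-versus-exponential race. The points that need care are modelling rather than arithmetic: justifying that the dummy-triplet process is Poisson of rate $\lambda_d$ and independent of $T_u$ (which rests on the simplifying assumptions already declared in \autoref{sec:analytic}, in particular the exponential think time and the ``reset'' of the preemptive triplet), and pinning down the index conventions so that the support of $D$ starts at $0$, corresponding to the user connection arriving before the first dummy delay elapses.
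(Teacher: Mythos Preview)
Your proposal is correct. The route, however, differs from the paper's. The paper conditions on the user arrival time $T=t$, uses that the number of dummies in $[0,t]$ is $\mathrm{Poisson}(\lambda_d t)$, and then integrates out $T$ against its exponential density, evaluating $\int_0^\infty e^{-\lambda_d t}\frac{(\lambda_d t)^k}{k!}\,\lambda_u e^{-\lambda_u t}\,dt$ via the Gamma integral $\int_0^\infty x^n e^{-ax}\,dx = n!/a^{n+1}$. Your primary argument is instead the memoryless race between the two exponential clocks, which yields the recursion $\Pr[D=k]=\frac{\lambda_d}{\lambda_d+\lambda_u}\Pr[D=k-1]$ directly; your cross-check via $\Pr[D\ge k]=\Pr[\mathrm{Gamma}(k,\lambda_d)<T_u]$ is yet a third standard derivation. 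The race argument is arguably the most transparent of the three, since it makes explicit why the answer is geometric (each ``round'' is an independent Bernoulli trial with success probability $p$) and why only the ratio $\varphi$ matters; the paper's mixture computation is a bit more mechanical but exhibits the classical Poisson--exponential--geometric mixture identity in one line. Either is fine here.
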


\begin{figure}
	\centering
	\includegraphics[width=.9\columnwidth]{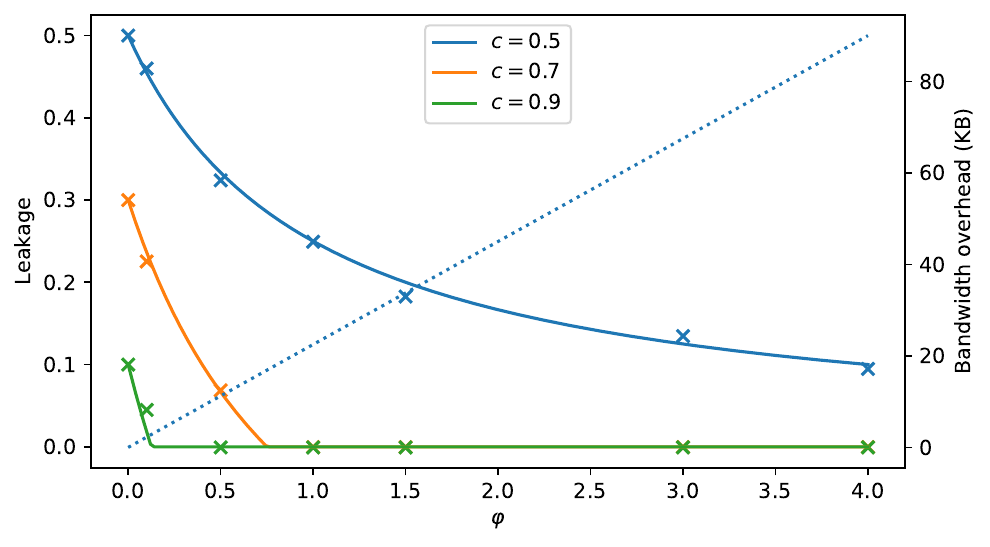}
	\vspace{-10pt}
	\caption{Leakage (solid lines) and bandwidth overhead (dotted lines) for the
      PCP strategy.
      Experimental results are also shown as individual crossed points.
   }
	\label{fig:experiment5_results}
\end{figure}

\paragraph{Base rate and classification accuracy}

As discussed in \autoref{sec:base-rate}, the asymmetry in the base rate
makes it harder for the adversary to produce
confident guesses and classifications of onion circuits.
In the example of \autoref{fig:padded-connections},
even if dummy handshakes are relatively sparse and $\Pr[D = 3]$ is much
smaller than $\Pr[D=2]$, it would be a \emph{base rate fallacy}
to immediately conclude that we have an onion connection \cite{base-rate}.
If $c = \Pr[S = 0]$ is close to $1$, it could still be more likely that a clearnet
connection happened to produce unusually many dummy handshakes.

We are finally ready to state our
main result, showing that the accuracy of an optimal adversary who predicts $S$
from $N$ depends only on $\varphi$ and $c$.

\begin{restatable}{theorem}{OptimalAccuracy}\label{thm:optimal-accuracy}
	The accuracy of the optimal classifier predicting
	$S$ from $N$ is equal to
	$\max\{c, 1 - c \frac{\varphi}{\varphi+1}\}$.
\end{restatable}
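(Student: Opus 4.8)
The plan is to evaluate directly the accuracy of the Bayes-optimal classifier from the joint law of $(S,N)$, using Theorem~\ref{thm:dummy-geometric} to supply the conditional distributions of $N$. The key observation is that the optimal predictor of $S$ from the observation $N$ is the maximum-a-posteriori rule, and its accuracy equals $\sum_{n}\max_{s\in\{0,1\}}\Pr[S=s,N=n]$: for each observed value of $N$ the best one can do is output the label of larger joint probability, and summing these maxima over the disjoint events $\{N=n\}$ gives the overall success probability. So the whole proof reduces to computing this sum.

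First I would write down the two conditional distributions. Since a clearnet connection has $N=D$, Theorem~\ref{thm:dummy-geometric} gives $\Pr[N=n\mid S=0]=p(1-p)^n$ for $n\ge 0$, with $p=\nicefrac{1}{1+\varphi}$; since an onion connection has $N=D+1$, we get $\Pr[N=n\mid S=1]=p(1-p)^{n-1}$ for $n\ge 1$ and $0$ for $n=0$. Multiplying by the priors $\Pr[S=0]=c$ and $\Pr[S=1]=1-c$ yields the joint masses. The term $n=0$ contributes $cp$ (here an onion connection is impossible, so the adversary is certain); for $n\ge 1$ the two joint masses are $c\,p(1-p)^n$ and $(1-c)\,p(1-p)^{n-1}$, whose maximum factors cleanly as $p(1-p)^{n-1}\,\max\{c(1-p),\,1-c\}$.

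Next I would sum the geometric series $\sum_{n\ge 1}p(1-p)^{n-1}=1$, so the accuracy collapses to $cp+\max\{c(1-p),\,1-c\}$. Finally, substituting $1-p=\nicefrac{\varphi}{1+\varphi}$ and splitting on the inner maximum: if $c(1-p)\ge 1-c$ the expression equals $cp+c(1-p)=c$, and the defining inequality $c(1-p)\ge 1-c$ rearranges exactly to $c\ge 1-c\frac{\varphi}{\varphi+1}$; otherwise the expression equals $cp+1-c=1-c(1-p)=1-c\frac{\varphi}{\varphi+1}$, which is then the larger of the two quantities. In both cases the accuracy equals $\max\{c,\,1-c\frac{\varphi}{\varphi+1}\}$, as claimed.

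I do not expect a real obstacle here: the computation is short and self-contained once Theorem~\ref{thm:dummy-geometric} is in hand. The only points requiring a little care are the boundary term at $n=0$, where the supports of the two conditionals differ, and the cancellation of $p$ against $1/p$ in the geometric sum; a minor expository point is to state at the outset that the optimal classifier is the MAP rule, which is standard.
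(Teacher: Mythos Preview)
Your proof is correct and follows essentially the same approach as the paper: both start from the MAP/Bayes classifier identity $\text{Accuracy}=\sum_n \max_s \Pr[S=s,N=n]$, compute the joint masses from \autoref{thm:dummy-geometric}, separate out the $n=0$ term, and then observe that the argmax for $n\ge 1$ is independent of $n$. Your presentation is slightly more streamlined in that you factor $p(1-p)^{n-1}$ and sum the geometric series before splitting on the max, whereas the paper performs the case split first and then sums each case separately; the content is the same.
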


As discussed in \autoref{sec:base-rate}, the adversary can achieve an accuracy
of $c$ by simply blind guessing. In
\autoref{fig:experiment5_results} we plot the system's \emph{leakage}, instead
of the accuracy, which expresses \emph{how much worse} the attack becomes due to
the system's output.

There are two main conclusions from these results: first, increasing $\varphi$
decreases the system's leakage, which is expected since it leads to more dummy
handshakes.  Second, the base rate $c$ works in our advantage: higher values
require stronger evidence to support an accurate detection, meaning that we can
achieve the same leakage with a smaller overhead $\varphi$.

These expressions are particularly useful for tuning the defense's parameters.
For instance, assuming $c = 0.7$ (much smaller than the $c=0.93$ estimate from
2018), we see that setting $\varphi=1$ is enough to make the system have
\emph{zero leakage}, meaning that the adversary is no better than blind
guessing.  Fixing $\varphi$ in turn allows us to configure the parameter
$\lambda_d$, the one needed for implementing the defense, by estimating
$\lambda_u$ from the user's previous behavior and setting
$\lambda_d = \varphi\cdot\lambda_u$.

\paragraph{Avoiding $\lambda_u$ leaks}

Although $\lambda_d$ is only known to the client, the middle node observes the
times between dummy handshakes (i.e. \emph{samples} from the distribution) and
hence it can estimate $\lambda_d$, and thus $\lambda_u$, leading to potential
fingerprinting issues. However, the middle node changes for every circuit and
on average it only learns $\varphi$ samples, so for small $\varphi$
(e.g. $\varphi = 0.5$) the estimate will be very rough and this type of leakage
could be tolerated. If we want to further protect $\lambda_u$ there are ways to
do so: first, we can increase $\lambda_d$ by adding sufficient noise to it so that its
initial value is obfuscated.

A different approach for bigger values of $\varphi$ would be to keep multiple
preemptive circuits open for each type (with different middle nodes), and
distribute the dummy handshakes randomly among them (in the extreme case send a
single handshake per circuit) while sending the application traffic on the
circuit with the most recent dummy handshake.  Although this approach has worse
performance because of the additional preemptive circuits, the security
analysis of the defense is not affected. The advantages are twofold: first, we
hide the $\lambda_u$ from individual middle nodes, and second, it works around
the restrictions imposed by Tor's RELAY\_EARLY mechanism.

\paragraph{Tradeoff between privacy and bandwidth}\label{sec:pcp-bandwidth}

Although our PCP scheme does not increase latency at all, it does incur a
bandwidth overhead due to the dummy handshakes.  Since we inject an average of
$\varphi$ handshakes per connection, the expected overhead will be
$\varphi\cdot 22.5$ KB/connection.  Hence, as expected, there is a tradeoff
between privacy and bandwidth, displayed in \autoref{fig:experiment5_results}.
Note that the overhead does not depend on $\lambda_d$ or $\lambda_u$, but only
on their ratio $\varphi$; moreover, in contrast to the fractional-delay
strategy (\autoref{sec:fractional-delay}), the overhead does not depend on $c$,
since we pad both clearnet and onion traffic.

\paragraph{Timing analysis}

The PCP defense has potential timing
fingerprints that are not present in our simpler delay-based defense.
In particular, even though the timing patterns
of onion handshakes can be imitated accurately (\autoref{sec:design}),
the timing between the application-layer traffic and the
padding could be used as a timing fingerprint.
We believe that a relative large value of $\lambda_d$ will alleviate this issue
(note that the low-latency of PCP is useful in busy periods when $\lambda_u,\lambda_d$ are
large).
Still, we consider that
analyzing and evaluating more advanced defenses against such timing
fingerprints to be beyond the scope of this paper and left as future work.

Note that the fractional-delay strategy from \autoref{sec:fractional-delay}
does not suffer from such potential fingerprints, since the application-layer
traffic immediately follows the handshake. We view the fractional-delay as
trading latency for robustness with regards to timing fingerprints.

\subsection{Experimental evaluation}\label{sec:pcp-evaluation}

To confirm our analytic evaluation, we performed a final experiment using the
same single-site scenario as in \autoref{sec:evaluation-delay}.  Implementing
the defense in our deployed framework of \autoref{sec:wtf} would require
deploying further framework modifications to the network (e.g. finer control of
preemptive circuits, and working around Tor's restricting RELAY\_EARLY
mechanism) so we chose to simulate the PCP mechanism similarly to the
experiments of \autoref{sec:evaluation-delay}.

We used various values of $\varphi$ and $c$, and performed the following
procedure: For each onion connection, we injected dummy handshakes in its three
circuits as follows: we first sampled the user thinking time $t$ from an
exponential distribution with a fixed rate $\lambda_u = 4$ reqs/hour.\footnote{%
  This choice is arbitrary, since $\lambda_u$ depends on the user behavior
  for which it is hard to obtain data. Still,
  since we keep $\varphi$ fixed (setting $\lambda_d = \varphi\cdot\lambda_u$),
  $\lambda_u$ only affects the bandwidth overhead and not the
  obtained privacy.
}
Then, we
continuously sampled the time of each dummy handshake from an exponential
distribution with rate $\lambda_d = \varphi\cdot\lambda_u$,
injecting a triple of dummy handshakes in each of the three circuits (HSDir,
introduction and rendezvous), until reaching the time $t$.  The same procedure
was followed to pad exit circuits, except that for each one of them two fake
ones (fake HSDir and fake Intro) were created and added to the dataset.

The results for exit and rendezvous circuits (of a single website) are shown in
\autoref{fig:experiment5_results}, showing the accuracy for each combination of
$\varphi,c$ are a data point over the analytic expressions. We see that the
accuracy is identical to that predicted from \autoref{thm:optimal-accuracy}, up
to a small experimental error. This is in fact quite remarkable, since the
classifier takes $N$ as its input, while in the experiments $N$ is not directly
observable, but the classifier receives the direct cell sequences. The
correspondence of the results means that the classifier in fact manages to
learn $N$ (and use it to infer the connection type) but nothing more, since the
traffic patterns are otherwise similar.

For other types of circuits we obtained the same results, omitted due to
space restrictions. We also repeated the whole set of experiments with
$\lambda_u = 8$ reqs/hour, obtaining the same results, hence verifying the analytic
conclusion
that privacy depends on $\varphi$
alone, and not on the individual values $\lambda_d,\lambda_u$.


\section{Discussion}\label{sec:discussion}

We briefly discuss several new directions in both improving the attacks
and exploring avenues for more effective defenses.

\paragraph{Defining circuit fingerprinting}

We believe that formalizing the circuit fingerprinting problem further can make
defenses and evaluations more robust. We believe that a formal security game
similar to IND-CPA where the adversary is asked to distinguish between onion
and exit cell traces seems like a fruitful direction forward.

Furthermore, instead of doing classification on a per-circuit basis, it could
prove more effective to do the classification on the global flow of
connections. This way we could use the timing correlations between circuits of
the onion service protocol as a feature for classification during experiments.

It is worth noting that previous works on website and circuit fingerprinting
did classification on a per-circuit level because investigating timing
correlations on a global level presents design issues and significantly
increases the uncertainty of the classifier. Our goal with this work has been
to push the circuit fingerprinting attacker into requiring access to precise
timing information and not being able to do fingerprinting with simpler
features like cell sequences.

\paragraph{Adversary capabilities}

As discussed in \autoref{sec:model}, it is common practice in the website
fingerprinting literature to consider the network adversary being able to
derive the precise cell contents of a circuit given raw TCP traces. However
because of the importance of the precise cell sequence in the circuit
fingerprinting world, we have a strong interest in weakening the network
adversary.

We believe we can achieve this using Tor's adaptive padding framework by
increasing the number of simultaneous padding circuits to a point where a
network adversary will find it hard to distinguish which circuit sends which
cell. Combined with Tor's move to a single guard per client \cite{one-guard} it
makes the work of the network adversary much harder since there will always be
simultaneous traffic to the single guard.

\paragraph{Future work}

While we implemented the initial padding machines into Tor, we did not
implement our PCP defense and opted for a simulation due to the development
effort required and the need for multiple iterations of the padding
techniques. We believe that implementing PCP in upstream Tor should be a next
step in the research against circuit fingerprinting.

Furthermore, our padding framework has room for improvement, since it currently
lacks features like \emph{flow control}, \emph{load balancing} and
\emph{probabilistic state transitions}\cite{tor-circ-padding-docs}.

Finally, we believe that the advanced padding techniques introduced in this
work can also be useful to hide the service-side onion service traffic from
circuit fingerprinting adversaries. The main issue is that onion services can
be distinguished by the great asymmetry between their incoming and outgoing
traffic, which is not present in regular Internet connections. To emulate that
volume of outgoing traffic using padding machines we would need realistic user
traffic patterns that we could replay with padding machines.


\section{Conclusion}

In this work we introduced novel padding-based techniques that can be used to
protect Tor onion service clients against circuit fingerprinting attacks. In
the process of doing so, we demonstrated new fingerprints that can be used to
distinguish onion service circuits. We also built and deployed a versatile
adaptive padding framework based on WTF-PAD that can be used to implement such
padding-based defenses.

Our work shows that adaptive padding defenses that utilize preemptive circuits
can be fruitful to thwart circuit fingerprinting adversaries with no additional
latency cost.












\bibliography{biblio}
\bibliographystyle{plain}

\appendix

%
\newenvironment{Reason}{\begin{tabbing}\hspace{2em}\= \hspace{1cm} \= \kill}
{\end{tabbing}\vspace{-1em}}
\newcommand\Step[2] {#1 \> $\begin{array}[t]{@{}llll}\displaystyle #2\end{array}$ \\}
\newcommand\StepR[3] {#1 \> $\begin{array}[t]{@{}llll}\displaystyle #3\end{array}$
\` {\RF \makebox[0pt][r]{\begin{tabular}[t]{r}``#2''\end{tabular}}} \\}
\newcommand\WideStepR[3] {#1 \>
$\begin{array}[t]{@{}ll}~\\\displaystyle #3\end{array}$ \`
{\RF \makebox[0pt][r]{\begin{tabular}[t]{r}``#2''\end{tabular}}} \\}
\newcommand\Space {~ \\}
\newcommand\RF {\small}

\section{Dataset sizes}\label{sec:dataset-sizes}

In this section we present the number of circuits in each dataset for each
experiment. These circuits describe the multi site scenarios of our tests.  The
number of circuits are shown below:

\paragraph{Delay-based strategy experiment}
\begin{center}
	\begin{tabular}{ |c|c| }
	\hline
	\textbf{HSDir} & \textbf{Fake HSDir} \\
	\hline
	15536 & 15536 \\
	\hline\hline
	\textbf{Introduction} & \textbf{Fake Introduction} \\
	\hline
	15508 & 15508 \\
	\hline\hline
	\textbf{Rendezvous} & \textbf{Padded-exit} \\
	\hline
	14955 & 15008 \\
	\hline
	\end{tabular}
\end{center}

\paragraph{Delay-based strategy over localhost experiment}

\begin{center}
	\begin{tabular}{ |c|c| }
	\hline
	\textbf{HSDir} & \textbf{Fake-HSDir} \\
	\hline
	15000 & 15000 \\
    \hline\hline
	\textbf{Introduction} & \textbf{Fake Introduction} \\
	\hline
	15000 & 15000 \\
	\hline\hline
	\textbf{Rendezvous} & \textbf{Padded-exit} \\
	\hline
	15000 & 15000 \\
	\hline
	\end{tabular}
\end{center}

\paragraph{PCP zero-latency strategy experiment}

\begin{center}
	\begin{tabular}{ |c|c| }
	\hline
	\textbf{HSDir} & \textbf{Fake-HSDir} \\
	\hline
	15000 & 15000 \\
    \hline\hline
	\textbf{Introduction} & \textbf{Fake Introduction} \\
	\hline
	15000 & 15000 \\
	\hline\hline
	\textbf{Rendezvous} & \textbf{Padded-Exit} \\
	\hline
	15000 & 15000 \\
	\hline
	\end{tabular}
\end{center}

\section{Onion service cell sequence diagrams}\label{sec:cell-diagrams}

In this section we present cell sequence diagrams for all onion
circuits. \autoref{fig:hsdir-circuit-cell-flow} shows the cell flow of HSDir
circuits, and \autoref{fig:rendezvous-circuit-cell-flow} shows the cell flow of
rendezvous circuits.

\begin{figure}[h]
  \tikzset{>=latex}
  \begin{tikzpicture}
	\draw (0,0) -- (0,4);
	\draw (2,0) -- (2,4);
	\draw (4,0) -- (4,4);
	\draw (6,0) -- (6,4);
	\draw (8,0) -- (8,4);

	\draw [dotted, arrows={-triangle 45}]
	(0, 4) --  node [midway, fill=white, text centered]
	{ \scriptsize EXTEND } (4, 3.75);

	\draw [dotted, arrows={triangle 45-}]
	(0, 3.25) --  node [midway, fill=white, text centered]
	{ \scriptsize EXTENDED } (4, 3.5);

	\draw [dotted, arrows={-triangle 45}]
	(0, 3.0) -- node [midway, fill=white, text centered]
	{ \scriptsize EXTEND } (6, 2.75);

	\draw [dotted, arrows={-triangle 45}]
	(6, 2.5) -- node [midway, fill=white]
	{ \scriptsize EXTENDED } (0, 2.25);

    \draw[dashed, arrows={-triangle 45}] (0, 2.0) -- node [midway, fill=white, text centered] {\scriptsize EXTEND} (8, 1.75);
    \draw (0, 2.0) -- (3.4, 1.89);

    \draw[dashed, arrows={-triangle 45}] (8, 1.5) -- node [midway, fill=white, text centered] {\scriptsize EXTENDED} (0, 1.25);
    \draw (3.3, 1.35) -- (0, 1.25);

    \draw[dashed, arrows={-triangle 45}] (0, 1.0) -- node [midway, fill=white, text centered] {\scriptsize BEGIN\_DIR + DATA} (8, 0.75);
    \draw (0, 1) -- (2.8, 0.91);

    \draw[dashed, arrows={-triangle 45}] (8, 0.5) -- node [midway, fill=white, text centered] {\scriptsize CONNECTED + DATA + END} (0, 0.25);
    \draw (2.3, 0.32) -- (0, 0.25);

	\node at (0,4.25) {\scriptsize Client};
	\node at (2,4.25) {\scriptsize Guard};
	\node at (4,4.25) {\scriptsize Middle 1};
	\node at (6,4.25) {\scriptsize Middle 2};
	\node at (8,4.25) {\scriptsize HSDir};
  \end{tikzpicture}

  \caption{Cell sequence of an HSDir request}
  \label{fig:hsdir-circuit-cell-flow}
\end{figure}
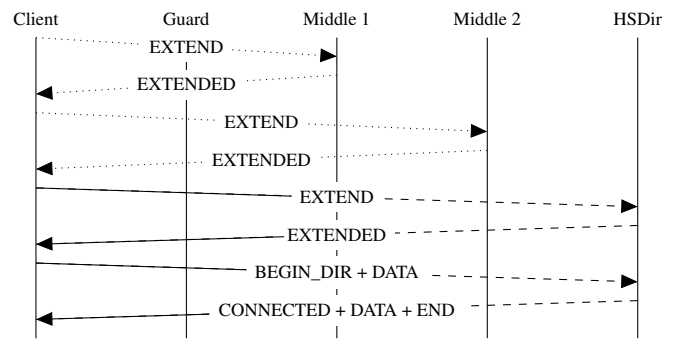

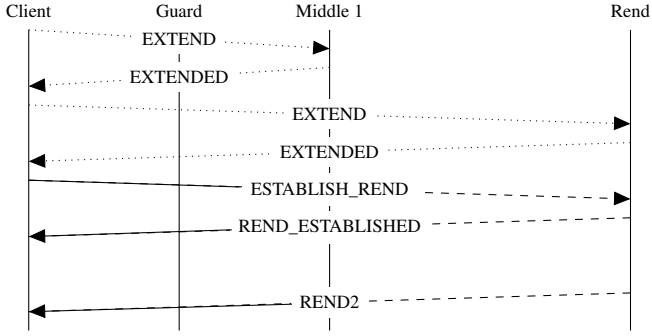
\begin{figure}[h]
  \tikzset{>=latex}
  \begin{tikzpicture}
	\draw (0,0) -- (0,4);
	\draw (2,0) -- (2,4);
	\draw (4,0) -- (4,4);
	\draw (8,0) -- (8,4);

	\draw [dotted, arrows={-triangle 45}]
	(0, 4) --  node [midway, fill=white, text centered]
	{ \scriptsize EXTEND } (4, 3.75);

	\draw [dotted, arrows={triangle 45-}]
	(0, 3.25) --  node [midway, fill=white, text centered]
	{ \scriptsize EXTENDED } (4, 3.5);

	\draw [dotted, arrows={-triangle 45}]
	(0, 3.0) -- node [midway, fill=white, text centered]
	{ \scriptsize EXTEND } (8, 2.75);

	\draw [dotted, arrows={-triangle 45}]
	(8, 2.5) -- node [midway, fill=white]
	{ \scriptsize EXTENDED } (0, 2.25);

    \draw[dashed, arrows={-triangle 45}] (0, 2.0) -- node [midway, fill=white, text centered] {\scriptsize ESTABLISH\_REND} (8, 1.75);
    \draw (0, 2.0) -- (2.8, 1.91);

    \draw[dashed, arrows={-triangle 45}] (8, 1.5) -- node [midway, fill=white, text centered] {\scriptsize REND\_ESTABLISHED} (0, 1.25);
    \draw (2.7, 1.34) -- (0, 1.25);


    \draw[dashed, arrows={-triangle 45}] (8, 0.5) -- node [midway, fill=white, text centered] {\scriptsize REND2} (0, 0.25);
    \draw (3.5, 0.35) -- (0, 0.25);

	\node at (0,4.25) {\scriptsize Client};
	\node at (2,4.25) {\scriptsize Guard};
	\node at (4,4.25) {\scriptsize Middle 1};
	\node at (8,4.25) {\scriptsize Rend};
  \end{tikzpicture}

  \caption{Cell sequence of a rendezvous handshake}
  \label{fig:rendezvous-circuit-cell-flow}
\end{figure}

\section{Padding Machines}\label{sec:padding-machines}

In this section we provide the source code and state machine of padding
machines for creating dummy introduction handshakes.

\subsection{Client-side introduction machine}

The source code below specifies a client-side padding machine that creates
dummy introduction handshakes. The state diagram of the padding machine can be
seen in \autoref{fig:client-intro}.

\begin{figure}[h]
\begin{tikzpicture}[shorten >=1pt,node distance=3.5cm,on grid,auto]
  \tikzstyle{every state}=[fill={rgb:black,1;white,10}]

    \node[state,initial,scale=0.7]   (extend)                    {EXTEND};
    \node[state,scale=0.7 ] (introduce)  [right of=extend]    {INTRODUCE};
    \node[state,accepting,scale=0.7]           (end)  [right of=introduce]    {END};

    \path[->]
    (extend) edge [bend left]  node[scale=0.7]{}    (introduce)
    (introduce) edge [bend left]  node {}    (end);

\end{tikzpicture}
\caption{Client-side introduction handshake padding machine}
\label{fig:client-intro}
\end{figure}
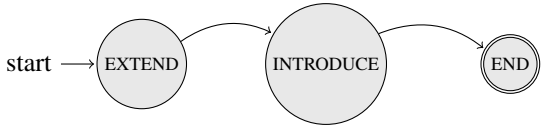

\lstinputlisting[language=C,basicstyle=\tiny]{machines/intro-client.c}

\subsection{Relay-side introduction machine}

The source code below specifies a relay-side padding machine that creates
dummy introduction handshakes. The state diagram of the padding machine can be
seen in \autoref{fig:relay-intro}.

\begin{figure}[h]
\begin{tikzpicture}[shorten >=1pt,node distance=3.5cm,on grid,auto]
  \tikzstyle{every state}=[fill={rgb:black,1;white,10}]

    \node[state,initial,scale=0.7]   (extended)                    {EXTENDED};
    \node[state,scale=0.7] (introduce_ack)  [right of=extended]    {INTRODUCE\_ACK};
    \node[state,accepting,scale=0.7]           (end)  [right of=introduce_ack]    {END};

    \path[->]
    (extended) edge [bend left]  node[scale=0.7]{}    (introduce_ack)
    (introduce_ack) edge [bend left]  node {}    (end);

\end{tikzpicture}
\caption{Relay-side introduction handshake padding machine}
\label{fig:relay-intro}
\end{figure}
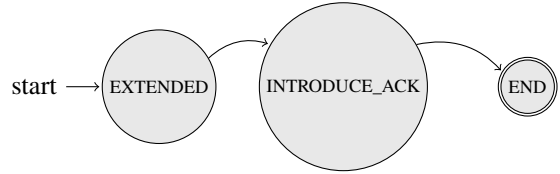

\lstinputlisting[language=C,basicstyle=\tiny]{machines/intro-relay.c}

\section{Delay-based strategy experiment}

Similarly with the metrics of \autoref{Tab:experiment4_statistics}, in this
section we present metrics for the delay-based strategy experiment on this
section.  The metrics are TPR and FPR, alongside with Precision for the
interested circuit in each case.  Combined with accuracy results on
\autoref{Tab:experiment3_accuracy}, they describe the classification
performance for both Decision Tree and SVM classifiers for our multi site
scenario.

\begin{table*}[t]
	\caption{True Positive and False Positive Rates in comparison with Precision (delay-based strategy) (A positive result means that the classifier thought that a circuit is a real onion circuit)}
	\centering
	\begin{tabular}{|c|c|c|c|c|c|c|c|}
	\hline
	\multirow{2}{*}{} & & \multicolumn{2}{c|}{\textbf{Fake-HSDir vs HSDir}} & \multicolumn{2}{c|}{\textbf{Fake-Intro vs Intro}} & \multicolumn{2}{c|}{\textbf{Padded-Exit vs Rend}} \\
	\cline{3-8}
	& & \begin{tabular}{@{}c@{}}Dec. Tree\end{tabular} & \begin{tabular}{@{}c@{}}SVM\end{tabular} & \begin{tabular}{@{}c@{}}Dec. Tree\end{tabular} & \begin{tabular}{@{}c@{}}SVM\end{tabular} & \begin{tabular}{@{}c@{}}Dec. Tree\end{tabular} & \begin{tabular}{@{}c@{}}SVM\end{tabular} \\
	\hline
	\multirow{3}{*}{Multi-Close} & TPR & 0.03 & 0.03 & 0 & 0 & 0.91 & 0.98 \\
	\cline{2-8}
	& FPR & 0 & 0 & 0 & 0 & 0.31 & 0.72 \\
	\cline{2-8}
	& Precision & 1.00 & 1.00 & 1.00 & - & 0.73 & 0.56 \\
	\hline
	\multirow{3}{*}{Multi-Open} & TPR & 0.04 & 0.04 & 0 & 0 & 1.00 & 0.49 \\
	\cline{2-8}
	& FPR & 0 & 0 & 0 & 0 & 0.99 & 0.50 \\
	\cline{2-8}
	& Precision & 1.00 & 1.00 & - & - & 0.49 & 0.49 \\
	\hline
	\end{tabular}
	\label{Tab:experiment3_statistics}
\end{table*}

\begin{table*}[t]
	\caption{True Positive and False Positive Rates in comparison with Precision (delay-based strategy over localhost) (A positive result means that the classifier thought that a circuit is a real onion circuit)}
	\centering
	\begin{tabular}{|c|c|c|c|c|c|c|c|}
	\hline
	\multirow{2}{*}{} & & \multicolumn{2}{c|}{\textbf{Fake-HSDir vs HSDir}} & \multicolumn{2}{c|}{\textbf{Fake-Intro vs Intro}} & \multicolumn{2}{c|}{\textbf{Padded-Exit vs Rend}} \\
	\cline{3-8}
	& & \begin{tabular}{@{}c@{}}Dec. Tree\end{tabular} & \begin{tabular}{@{}c@{}}SVM\end{tabular} & \begin{tabular}{@{}c@{}}Dec. Tree\end{tabular} & \begin{tabular}{@{}c@{}}SVM\end{tabular} & \begin{tabular}{@{}c@{}}Dec. Tree\end{tabular} & \begin{tabular}{@{}c@{}}SVM\end{tabular} \\
	\hline
	\multirow{3}{*}{Multi-Close} & TPR & 0 & 0 & 1.00 & 1.00 & 0.61 & 0.86 \\
	\cline{2-8}
	& FPR & 0 & 0 & 1.00 & 1.00 & 0.64 & 0.88 \\
	\cline{2-8}
	& Precision & - & - & 0.49 & 0.49 & 0.47 & 0.48 \\
	\hline
	\multirow{3}{*}{Multi-Open} & TPR & 0 & 0 & 1.00 & 1.00 & 1.00 & 1.00 \\
	\cline{2-8}
	& FPR & 0 & 0 & 1.00 & 1.00 & 1.00 & 1.00 \\
	\cline{2-8}
	& Precision & - & - & 0.50 & 0.50 & 0.50 & 0.50 \\
	\hline
	\end{tabular}
	\label{Tab:experiment4_statistics}
\end{table*}

\section{Proofs of \autoref{sec:evaluation-pcp}}\label{sec:proofs}

\DummyGeometric*
\begin{proof}
	Let $T$ be a random variable modelling the time when the
	user's request arrives. $T$ is assumed to follow an exponential
	distribution with rate $\lambda_u$, with pdf:
	\begin{equation}\label{eq1}
		f_{\,T}(t) \Wide= \lambda_u e^{-\lambda_u t}
		~.
	\end{equation}

	Now fix some value $T=t$, and consider the number of dummies
	generated before time $t$.
	Since the inter-arrival time between dummies follows an exponential
	distribution with rate $\lambda_d = \varphi\lambda_u$, it is well known that
	the number of dummies generated before time $t$
	follows a Poisson distribution with parameter $\lambda_d t$,
	in other words
	\begin{equation}\label{eq2}
		\Pr[D = k \ |\ T=t ] \Wide= e^{-\lambda_d t} \frac{(\lambda_d t)^k}{k!}
		~.
	\end{equation}
	Letting $p = \frac{1}{1+\varphi}$, note that
	\begin{equation}\label{eq3}
		p \Wide= \frac{\lambda_u}{\lambda_d + \lambda_u}
		\quad\text{and}\quad
		1-p \Wide= \frac{\lambda_d}{\lambda_d + \lambda_u}
		~.
	\end{equation}
	Finally, we have that:
	\begin{Reason}
	\Step{}{
		\Pr[D = k]
	}
	\Step{$=$}{
		\int_{0}^{\infty} \Pr[D = k \ |\  T=t]\cdot f_{\,T}(t) \;dt
	}
	\StepR{$=$}{\eqref{eq1},\eqref{eq2}}{
		\int_{0}^{\infty}
		e^{-\lambda_d t} \frac{(\lambda_d t)^k}{k!}
		\lambda_u e^{-\lambda_u t}
	}
	\Step{$=$}{
		\frac{ {\lambda_d}^k \lambda_u } { k! }
		\int_{t=0}^{\infty}
		t^k e^{-(\lambda_d + \lambda_u)t }
	}
	\StepR{$=$}{$\int_0^{\infty} x^n e^{-ax}dx = \frac{n!}{a^{n+1}}$}{
		\frac{ {\lambda_d}^k \;\lambda_u } {(\lambda_d + \lambda_u)^{k+1}  }
	}
	\StepR{$=$}{\eqref{eq3}}{
		p (1-p)^k
		~.
	}
	\end{Reason}

\end{proof}

\OptimalAccuracy*
\begin{proof}
	The Bayes classifier (which is well known to be optimal) maps an observation
	$k$ to the label $s$ with
	the highest \emph{posterior} probability $\Pr [ S=s \ |\  N = k ]$.
	Note that this is equivalent to selecting the label $s$ with the highest
	\emph{joint} probability $\Pr [ S=s,  N = k ]$.
	This classification is correct for all pairs $(s,k)$ such that $s$ is
	the one with the highest $\Pr [ S=s,  N = k ]$, for that $k$.
	Hence the classifier's accuracy is equal to the probability of such a pair appearing,
	which is given by
	\begin{equation}\label{eq4}
		\text{Accuracy} \Wide=
		\sum_{k=0}^{\infty} \max_{s\in\{0,1\}} \Pr [ S = s, N = k ]
		~.
	\end{equation}

	To compute the joint probabilities, note that when $S =0$ (exit circuit),
	the adversary only observes the generated dummy triplets, that is $N =
	D$. On the other hand, when $S = 1$ (onion circuit), the adversary
	observes the dummy triplets plus the real one, that is $N = D+1$. Since
	$D$ follows a geometric distribution with parameter $p = \frac{1}{1+\varphi}$
	(\autoref{thm:dummy-geometric}), and
	(by definition) $c = \Pr[S = 0] = 1 - \Pr[S = 1]$, we get the following
	expressions for the joint probabilities:
	\begin{align*}
		\Pr [ S = 0, N = k ]
			&\Wide=
			c\cdot p(1-p)^k
			\\
		\Pr [ S = 1, N = k]
			&\Wide=
			\begin{cases}
				0 & \text{if } k = 0 \\
				(1-c)p(1-p)^{k-1} & \text{if } k > 0
			\end{cases}
	\end{align*}

	We now need to find which label $s \in \{0,1\}$ gives
	the max for each observation $k$ in \eqref{eq4}.
	For $k=0$, the max is clearly given by $s=0$
	since $\Pr[S = 1, N = 0] = 0$
	(intuitively,
	onion circuits produce at least one triplet).
	For $k>0$, on the other hand, the max is given by $s=0$
	iff
	\begin{align}
		\Pr [ S = 0, N = k ]
			&\Wide\ge
			\Pr [ S = 1, N = k ]
			&\Leftrightarrow \nonumber\\
		c\cdot p(1-p)^k
			&\Wide\ge
			(1-c)p(1-p)^{k-1}
			&\Leftrightarrow \nonumber\\
		c
			&\Wide\ge
			1-c(1-p) \label{eq5}
	\end{align}
	Note that \eqref{eq5} does not depend on $k$, in other
	words either $s=0$ or $s=1$ give the max for all $k > 0$
	simultaneously. We consider the two cases.

	Case $c \ge 1-c(1-p) $: in this case the max in \eqref{eq4}
	is given by	$s=0$
	for all $k \ge 0$, hence the accuracy becomes
	$$
		\text{Accuracy}
		\Wide=
		\sum_{k=0}^{\infty} \Pr [ S = 0, N = k ]
		\Wide=
		\Pr [ S = 0 ] \Wide= c
		~.
	$$

	Case $c < 1-c(1-p) $: in this case the max in \eqref{eq4}
	is given by	$s=0$
	for $k=0$ and by $s=1$ for $k > 0$, hence the accuracy becomes
	\begin{Reason}
		\Step{}{
			\text{Accuracy}
		}
		\Step{$=$}{
			\Pr [ S = 0, N = 0 ] +
			\sum_{k=1}^{\infty} \Pr [ S = 1, N = k ]
		}
		\Step{$=$}{
			\Pr [ S = 0, N = 0 ] +
			\big(\Pr[S = 1] - \Pr [ S = 1, N = 0 ]\big)
		}
		\Step{$=$}{
			cp + (1-p)
		}
		\Step{$=$}{
			1 - c(1-p)
			~.
		}
	\end{Reason}

	Finally, combining the two cases we get:
	\begin{Reason}
		\Step{}{
			\text{Accuracy}
		}
		\Step{$=$}{
			\begin{cases}
				c & \text{if } c \ge 1-c(1-p) \\
				1 - c (1-p) & \text{otherwise}
			\end{cases}
		}
		\Step{$=$}{
			\max\{ c , 1-c(1-p) \}
		}
		\StepR{$=$}{$p = \frac{1}{1+\varphi}$}{
			\max\{ c , 1-c\frac{\varphi}{1+\varphi}) \}
			~.
		}
	\end{Reason}

\end{proof}

\begin{figure}
	\centering
	\includegraphics[width=\columnwidth]{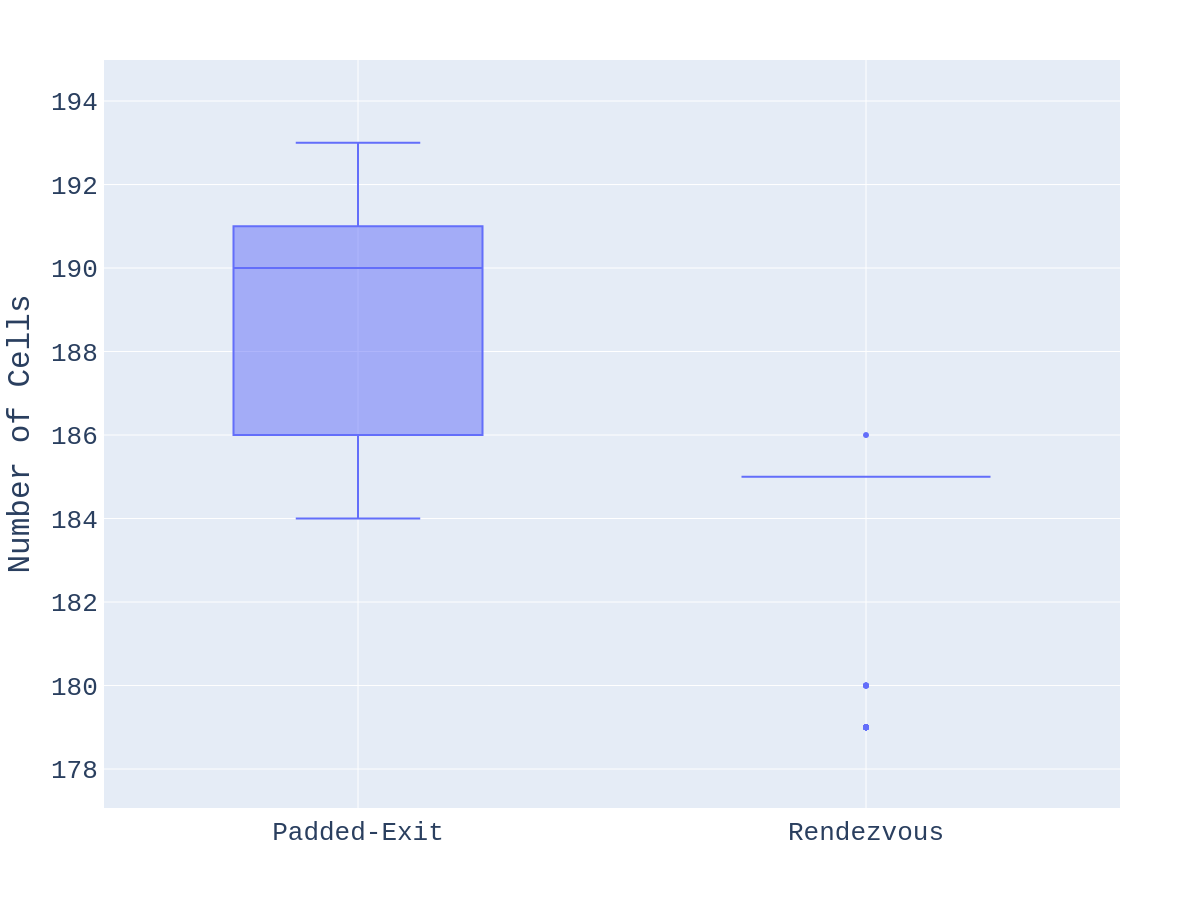}
	\caption{Cell packing difference between rendezvous and exit circuits}
	\label{fig:cell_dif_exp3}
\end{figure}


\clearpage
\MakeEndNotes
\onecolumn
\MarkupsHowto 

\end{document}
